\newcommand{\argmin}{\mathop{\rm argmin}\limits}
\newtheorem{assumption}{Assumption}
\newtheorem{theorem}{Theorem}
\newtheorem{lemma}{Lemma}
\def\hlinewd#1{%
	\noalign{\ifnum0=`}\fi\hrule \@height #1 %
	\futurelet\reserved@a\@xhline}
\begin{document}

\preprint{APS/123-QED}

\title{Linear Regression by Quantum Amplitude Estimation and its Extension to Convex Optimization}

\author{Kazuya Kaneko}
\affiliation{Mizuho-DL Financial Technology Co., Ltd.\\ 2-4-1 Kojimachi, Chiyoda-ku, Tokyo, 102-0083, Japan}

\author{Koichi Miyamoto}
\email{koichi.miyamoto@qiqb.osaka-u.ac.jp}
\affiliation{Center for Quantum Information and Quantum Biology, Osaka University \\ 1-3 Machikaneyama, Toyonaka, Osaka, 560-8531, Japan}
\affiliation{Mizuho-DL Financial Technology Co., Ltd.\\ 2-4-1 Kojimachi, Chiyoda-ku, Tokyo, 102-0083, Japan}

\author{Naoyuki Takeda}
\affiliation{Mizuho-DL Financial Technology Co., Ltd.\\ 2-4-1 Kojimachi, Chiyoda-ku, Tokyo, 102-0083, Japan}

\author{Kazuyoshi Yoshino}
\affiliation{Mizuho-DL Financial Technology Co., Ltd.\\ 2-4-1 Kojimachi, Chiyoda-ku, Tokyo, 102-0083, Japan}

\date{\today}

\begin{abstract}
Linear regression is a basic and widely-used methodology in data analysis.
It is known that some quantum algorithms efficiently perform least squares linear regression of an exponentially large data set.
However, if we obtain values of the regression coefficients as classical data, the complexity of the existing quantum algorithms can be larger than the classical method.
This is because it depends strongly on the tolerance error $\epsilon$: the best one among the existing proposals is $O(\epsilon^{-2})$.
In this paper, we propose a new quantum algorithm for linear regression, which has a complexity of $O(\epsilon^{-1})$ and keeps a logarithmic dependence on the number of data points $N_D$.
In this method, we overcome bottleneck parts in the calculation, which take the form of the sum over data points and therefore have a complexity proportional to $N_D$, using quantum amplitude estimation, and other parts classically.
Additionally, we generalize our method to some class of convex optimization problems.
\end{abstract}

\pacs{Valid PACS appear here}
                              
\maketitle

\section{\label{sec:intro}Introduction}

Following the rapid advance of quantum computing technology, many quantum algorithms have been proposed and their applications to the wide range of practical problems have been studied in recent research.
One prominent example is linear regression.
Linear regression, which is based on the least squares method in many cases, is a basic and ubiquitous methodology for many fields in natural and social sciences.
There are some quantum algorithms for linear regression~\cite{Wiebe,Wang,Kerenidis,Yu,Schuld,Chakraborty} with complexity depending on the number of data points $N_D$ as $O({\rm polylog}(N_D))$ \footnote{There are also quantum-inspired classical methods~\cite{Chia,Gilyen,Shao}.}.
This denotes the exponential speedup compared with the naive classical method explained in Sec. \ref{sec:LR}, whose complexity is proportional to $N_D$.

However, despite this, the existing quantum methods are not necessarily more beneficial than classical ones, when we want to obtain the values of the optimized regression coefficients {\it as classical data}, and $N_D$ is {\it mid-sized}, say $O(10^3-10^5)$.
Roughly speaking, in the existing methods such as the first proposed one~\cite{Wiebe}, which is based on the Harrow-Hassidim-Lloyd (HHL) algorithm~\cite{HHL} for solving systems of linear equations, the authors create quantum states in which the values of the coefficients are encoded and read out in classical form.
Therefore, it is inevitable that the estimated coefficients are accompanied by errors, and high-accuracy estimation leads to large complexity.
As far as we know, the existing method with the best complexity with respect to the tolerance error $\epsilon$ is that in \cite{Wang}: the complexity of estimating coefficients with additive error at most $\epsilon$ is
\begin{equation}
O\left(\frac{d^{5/2}\kappa^3}{\epsilon^2}{\rm polylog}\left(\frac{d\kappa}{\epsilon}\right)\right), \label{eq:CompWang}
\end{equation}
where $d$ is the number of the coefficients (or the explanatory variables) and $\kappa$ is the condition number of the design matrix (defined in Sec. \ref{sec:LR}).
On the other hand, a naive classical method has complexity $O\left(d^2N_D\right)$.
Therefore, assuming the prefactors in these expressions of complexities are comparable, under an ordinary situation $\epsilon \sim 10^{-3},d \sim 10, \kappa \sim 1$, the minimum $N_D$ for which the quantum method is advantageous over classical one is $N_D\sim 10^6$.
Although in some problems $N_D$ is of such an order or much larger, cases where $N_D \sim 10^3$ or $10^4$ are also ubiquitous, and in such cases the exiting quantum methods are inferior to classical ones.

The mid-sized regression problems are often time-consuming and need to be sped up, although we might naively think that such problems can be solved by classical computers in a short time.
For example, if such regressions are repeated in the whole of the calculation flow, the total computational time can be long.
One such case is the {\it least square Monte Carlo (LSM)}~\cite{Longstaff}.
LSM is a methodology used in pricing financial derivatives\footnote{Financial derivatives, or simply derivatives, are financial contracts in which two parties exchanges cash flows whose amounts are determined by the price of some assets. For textbooks on derivatives and their pricing, we refer to \cite{Hull,Shreve}.} with an early exercise option, that is, the contract term stating that either of two parties in the contract can terminate it before the final maturity.\footnote{For the details of LSM, see \cite{Longstaff}.}
In Monte Carlo pricing, we generate many (say, $O(10^4-10^5)$) paths of the random time evolution of underlying assets, and we estimate the price as the expected cash flow.
In the case of early-exercisable products, we have to determine the optimal exercise time for each path. 
In LSM, we approximate the continuation value of the derivative at each exercise date by linear regression using certain functions of underlying asset prices as explanatory variables, whose number is typically $d \sim 10$.
Since regression is done many times in pricing one contract and banks have numerous contracts, they have huge complexity in total and are meaningful targets of quantum speedup, even though each of them is mid-sized.

Based on the above motivation, in this paper, we present a new quantum algorithm for linear regression, focusing on reducing the order of the inverse of $\epsilon$ in the expression of the complexity.
In our method, unlike existing methods, we do not perform all of the calculations on a quantum computer.
Instead, we use a quantum computer {\it only to perform a bottleneck part in the naive classical method}.
That is, we perform some intermediate calculation, which is the sum over $N_D$ terms and therefore has complexity proportional to $N_D$ naively, by quantum algorithm.
Then, we classically solve the $d$-dimensional system of linear equations, whose coefficients and constant terms are outputs of the preceding quantum computation, and we obtain the regression coefficients.
A classical computer can perform this step in negligible time for $d\sim 10$, without inducing any error.
Also note that we naturally obtain the regression coefficients as classical data.

To speed up the bottleneck sums, we use {\it quantum amplitude estimation (QAE)}~\cite{Brassard,Suzuki,Aaronson,Grinko,Nakaji,Brown,Tanaka}, which is also used to speed up Monte Carlo integration~\cite{Montanaro,Suzuki}.
Then, as discussed in detail in Sec. \ref{sec:LRbyQAE}, we can perform the sums with complexity
\begin{equation}
	O\left(\max\left\{\frac{d^{3/2}\kappa^4}{\epsilon},d\kappa^2\right\}\times d^2\log (d)\right), \nonumber
\end{equation}
and the total complexity of the whole of our method is almost equal to this.
Compared with the naive classical method, whose complexity is $O(d^2N_D)$, our method accomplishes speedup with respect to $N_D$. 
In addition, compared with the existing quantum methods, our method improves the dependency of the complexity on $\epsilon$ from $\epsilon^{-2}$ in (\ref{eq:CompWang}) to $\epsilon^{-1}$.
Unfortunately, in our method, the dependencies on $d$ and $\kappa$ are worse than those of (\ref{eq:CompWang}) and the naive classical method.
Therefore, our method is more suitable to the situation such that
\begin{equation}
	d, \kappa \ll \frac{1}{\epsilon} \ll N_D,
\end{equation}
which includes typical mid-sized problems such as LSM.

In addition to linear regression, we generalize our method to some class of convex optimization.
As we see in Sec. \ref{sec:extension}, linear regression can be considered as an optimization problem of the sum of quadratic functions solved by Newton's method.
Inspired by this, we consider a convex optimization problem where an objective function is written as sums of many terms
similarly to linear regression, and we present a quantum algorithm of Newton's method, in which calculation of the gradient and the Hessian is sped up by QAE.
We also estimate complexity to obtain a solution with a desired error level under some mathematical assumptions which are usually made in the convergence analysis of Newton's method.

The remaining parts of this paper are organized as follows.
Sec. \ref{sec:Preliminary} is a preliminary section.
In Sec. \ref{sec:Notation}, we present some notations and definitions of quantities necessary for the later discussion.
In Sec. \ref{sec:LR}, we briefly review linear regression and its classical method.
In Sec. \ref{sec:LRbyQAE}, we explain our method in detail.
In Sec. \ref{sec:extension}, we discuss the extension of our method to some convex optimization problems.
Sec. \ref{sec:summary} summarizes the paper.

\section{\label{sec:Preliminary}Preliminary}
\subsection{\label{sec:Notation}Notations and Definitions}

In this subsection, we present some notations and definitions about linear algebra for later use.
For details, refer to textbooks on linear algebra, e.g., \cite{Golub}.

For $\vec{v}=(v_1,...,v_n)^T\in \mathbb{R}^n$, we define the Euclidean norm $\left\|\vec{v}\right\|$ and the max norm $\left\|\vec{v}\right\|_\infty$ as follows:
\begin{equation}
\left\|\vec{v}\right\| := \sqrt{\sum_{i=1}^{n}|v_i|^2}, \ 
\left\|\vec{v}\right\|_\infty := \max\{|v_1|,...,|v_n|\}.
\end{equation}

We also define some types of norms for a matrix $A=(a_{ij})_{\substack{1\le i \le m \\ 1\le j \le n}}\in \mathbb{R}^{m\times n}$.
The first one is the {\it spectral norm}, which is defined as 
\begin{equation}
\left\|A\right\| := \max_{\substack{\vec{v}\in \mathbb{R}^n \\ \left\|\vec{v}\right\|\ne 0}} \frac{\left\|A\vec{v}\right\|}{\left\|\vec{v}\right\|}, \label{eq:spnorm}
\end{equation}
and equal to $\sigma_{\rm max}(A)$, the maximum singular value of $A$.
Secondly, we define the {\it Frobenius norm} as
\begin{equation}
\left\|A\right\|_F := \left(\sum_{i=1}^m \sum_{j=1}^n |a_{ij}|^2\right)^{1/2},
\end{equation}
which satisfies $\left\|A\right\| \le \left\|A\right\|_F$.

Furthermore, we introduce the {\it condition number}, which measures the sensitivity of a linear operator to the errors in the input.
For a full-rank matrix $A\in \mathbb{R}^{m\times n}$, the condition number is defined as
\begin{equation}
\kappa(A) := \frac{\sigma_{\rm max}}{\sigma_{\rm min}},
\end{equation}
that is, the ratio of the maximum singular value $\sigma_{\rm max}$ of $A$ to the minimum one $\sigma_{\rm min}$.
If $A$ is invertible, 
\begin{equation}
\kappa(A) = \left\|A\right\| \left\|A^{-1}\right\|.
\end{equation}

\subsection{\label{sec:LR}Linear regression}

We here define the problem of linear regression.
Assume that we have $N_D$ data points $\mathcal{D}:=\{(\vec{x}_i,y_i)\}_{i=1,..,N_D}$, each of which consists of a vector of $d$ explanatory variables\footnote{If we want to consider the intercept, we include a dummy variable (say $x^{(1)}_i$) in each explanatory variable vector and set it to 1: $x^{(1)}_1=...=x^{(1)}_{N_D}=1$.} $\vec{x}_i=(x^{(1)}_i,...,x^{(d)}_i)^T\in \mathbb{R}^d$ and an objective variable $y_i\in \mathbb{R}$.
Linear regression attempts to fit the linear combination of the explanatory variables to the objective variable, that is, finding $\vec{a}$ such that
\begin{equation}
\vec{y} \simeq X \vec{a}.
\end{equation}
Here, $a_1,...,a_d\in \mathbb{R}$ are model parameters called {\it regression coefficients} and $\vec{a}:=(a_1,...,a_d)^T\in \mathbb{R}^d$.
$\quad\vec{y}:=(y_1,...,y_{N_D})^T$ and 
\begin{equation}
X := \left(
\begin{matrix}
x^{(1)}_1 & \cdots & x^{(d)}_1 \\
\vdots & \ddots & \vdots \\
x^{(1)}_{N_D} & \cdots & x^{(d)}_{N_D}
\end{matrix}
\right)
\label{eq:designmat}
\end{equation}
is called the {\it design matrix}.
In this paper, as in many cases, we determine $\vec{a}$ by the {\it least squares method}, that is,
\begin{equation}
\vec{a} = \argmin_{\vec{a}^\prime} \left\|\vec{y}-X\vec{a}^\prime\right\|^2. \label{eq:LS}
\end{equation}
As is well known (again, see \cite{Golub} for example), the solution of (\ref{eq:LS}) is given by
\begin{equation}
\vec{a} = \left(X^TX\right)^{-1}X^T\vec{y}, \label{eq:LSSol}
\end{equation}
where we assume that $X$ is full-rank, as stated again in Sec. \ref{sec:ass} as Assumption \ref{ass:fullrank}.

Here, let us introduce some symbols.
We define $d\times d$ matrix $W$,
\begin{equation}
	W := \frac{1}{N_D}X^TX, \label{eq:WMat}
\end{equation}
whose $(i,j)$-element is
\begin{equation}
w_{ij} = \frac{1}{N_D}\sum_{k=1}^{N_D} x^{(i)}_k x^{(j)}_k. \label{eq:w}
\end{equation}
$W$ is invertible since we assumed that $X$ is full-rank.
We also define the $d$-dimensional vector $\vec{z}$,
\begin{equation}
	\vec{z}:=\frac{1}{N_D}X^T\vec{y}, \label{eq:zvec}
\end{equation}
whose $i$-th element is
\begin{equation}
z_i = \frac{1}{N_D}\sum_{k=1}^{N_D} x^{(i)}_k y_k. \label{eq:z}
\end{equation}
In (\ref{eq:WMat}) and (\ref{eq:zvec}), the prefactor $1/N_D$ is just for the later convenience.
Using these, (\ref{eq:LSSol}) becomes
\begin{equation}
	\vec{a} = W^{-1}\vec{z}. \label{eq:LSSolScale}
\end{equation}

We here call the following classical computation {\it the naive classical method}: calculate $w_{ij}$'s and $z_i$'s by simply repeating multiplications and additions $N_D$ times, literally following the definitions (\ref{eq:w}) and (\ref{eq:z}), and then solve the $d$-dimensional system of linear equations (\ref{eq:LSSolScale}) to find $\vec{a}$.
Let us discuss the complexity of this method.
Since we are considering the situation $d\ll N_D$, we focus only on the major contributions with respect to $N_D$.
Calculating one $w_{ij}$ or $z_i$ takes complexity of $O(N_D)$.
Since the numbers of $w_{ij}$'s and $z_i$'s are $O(d^2)$ and $O(d)$, respectively, and their total is $O(d^2)$, the total complexity of calculating all of $w_{ij}$'s and $z_i$'s is $O(d^2N_D)$.
Then, solving (\ref{eq:LSSolScale}) takes the negligible complexity, since it does not depend on $N_D$.
For example, even if we use the elementary method such as row reduction, the complexity is $O(d^3)$.
In summary, the complexity of the naive classical method is $O(d^2N_D)$, which dominantly comes from calculation of $w_{ij}$'s and $z_i$'s.

\section{\label{sec:LRbyQAE}Linear regression by quantum amplitude estimation}

\subsection{\label{sec:QAE}Quantum Amplitude Estimation}

In this section, we present our method for linear regression.
Before this, let us review the outline of QAE briefly.

QAE is the algorithm to estimate a probability amplitude of a marked state in a superposition.
Consider the system consisting of some qubits.
We set the system to the initial state where all qubits are set to $\ket{0}$ and write such a state as $\ket{0}$ for simplicity.
Then, we assume that there exists an unitary transformation $A$ on the system such that
\begin{equation}
A\ket{0} = a\ket{\Psi} + \sqrt{1-a^2}\ket{\Psi_\perp},
\end{equation}
where $\ket{\Psi}$ is the `marked state', $\ket{\Psi_\perp}$ is a state orthogonal to $\ket{\Psi}$ and $0<a<1$.
Typically, $\ket{\Psi}$ and $\ket{\Psi_\perp}$ are the states where a specific qubit takes $\ket{1}$ and $\ket{0}$ respectively.
In addition to $A$, we use the following unitary operators $S_0$ and $S_\Psi$, which are defined as
\begin{equation}
S_0\ket{\phi} =
\begin{cases}
-\ket{0} & ;  {\rm if} \ \ket{\phi} = \ket{0} \\
\ket{\phi} & ;  {\rm if} \ \ket{\phi} {\rm is} \ {\rm orthogonal} \ {\rm to} \ \ket{0}
\end{cases}
,
\end{equation}
\begin{equation}
S_\Psi\ket{\phi} =
\begin{cases}
-\ket{\Psi} &; {\rm if} \ \ket{\phi} = \ket{\Psi} \\
\ket{\phi} &; {\rm if} \ \ket{\phi} {\rm is} \ {\rm orthogonal} \ {\rm to} \ \ket{\Psi}
\end{cases}
.
\end{equation}
$S_0$ can be constructed using a multi-controlled Toffoli gate, and $S_\Psi$ is simply a controlled-$Z$ gate if $\ket{\Psi}$ is defined as the state where a specific qubit is $\ket{1}$.
Then, defining
\begin{equation}
Q:=-AS_0A^{-1}S_\Psi, \label{eq:Q}
\end{equation}
we can construct a quantum algorithm (see \cite{Brassard} for details) which makes
\begin{equation}
O\left(\frac{1}{\epsilon}\right) \label{eq:QAEcomp}
\end{equation}
uses of $Q$ (therefore, $O(1/\epsilon)$ uses of $A$) and outputs an estimate of $a$ with an $\epsilon$-additive error.

We make a comment here on success probability.
In the algorithm of QAE~\cite{Brassard}, the success probability, that is, the probability that the algorithm outputs the estimation with the desired additive error, is not 1 but lower-bounded by $8/\pi^2$.
However, we can enhance the success probability to an arbitrary level $1-\gamma$, where $\gamma\in (0,1)$, by repeating QAE $O\left(\log (\gamma^{-1})\right)$ times.
That is, taking the median of the results in the $O\left(\log (\gamma^{-1})\right)$ runs of QAE, we can obtain the estimation with the additive error $\epsilon$ with probability $1-\gamma$~\cite{Jerrum,Montanaro}.
Considering this point, we can write the number of calls to $A$ in repeating QAE with an $\epsilon$-additive error and a success probability larger than $1-\gamma$ as
\begin{equation}
	O\left(\frac{\log (\gamma^{-1})}{\epsilon}\right). \label{eq:QAEcompRep}
\end{equation}
If we set $1-\gamma$ to some fixed value, say 99\%, (\ref{eq:QAEcompRep}) is reduced to (\ref{eq:QAEcomp}).

\subsection{\label{sec:ass}Assumptions}

Next, we present some assumptions which are necessary for the method.
The first one is as follows.
\begin{assumption}
	The following oracles $P_x$ and $P_y$ are available:
	\begin{equation}
	P_x:\ket{i}\ket{k}\ket{0} \mapsto \ket{i}\ket{k}\ket{x^{(i)}_k}, \label{eq:Px}
	\end{equation}
	\begin{equation}
	P_y:\ket{k}\ket{0} \mapsto \ket{k}\ket{y_k}, \label{eq:Py}
	\end{equation}
	for any $i\in \{1,...,d\}$ and $k\in\{1,...,N_D\}$.
	\label{ass:oracle}
\end{assumption}
Here and hereafter, for a number $x$, the ket $\ket{x}$ corresponds to a computational basis state on a quantum register where the bit string represents the binary representation of $x$.
(\ref{eq:Px}) and (\ref{eq:Py}) mean that $P_x$ and $P_y$ output the element of $X$ and $\vec{y}$, respectively, for the specified index.
Previous papers \cite{Wiebe,Wang,Kerenidis,Yu,Schuld,Shao,Liu,Chakraborty} also assume such oracles.
We can construct $P_x$ and $P_y$ if quantum random access memories (QRAMs)~\cite{Giovannetti} are available.

The second assumption is just a reproduction.
\begin{assumption}
	$X$ defined as (\ref{eq:designmat}) is full-rank. \label{ass:fullrank}
\end{assumption}
Because of this, $\kappa(X)$, the condition number of $X$, can be defined, and $W$ defined as (\ref{eq:w}) is invertible.
Hereafter, we simply write $\kappa(X)$ as $\kappa$.

The third assumption is as follows.
\begin{assumption}
	\begin{equation}
	0 \le x^{(i)}_k \le 1, 0 \le y_k \le 1
	\end{equation}
	for any $i\in \{1,...,d\}$ and $k\in\{1,...,N_D\}$.
	\label{ass:bound}
\end{assumption}
That is, we assume that the explanatory variables and the objective variable are bounded by $0$ and $1$.
Besides, we make the fourth assumption as follows:
\begin{assumption}
	There is a positive number $c$ (say, $\frac{1}{2}$), which is independent of $N_D, \epsilon, \kappa$ and $d$, such that
	\begin{equation}
	\forall i\in\{1,...,d\}, w_{ii}=\frac{1}{N_D}\sum_{k=1}^{N_D} \left(x^{(i)}_k\right)^2 > c.
	\end{equation}
	\label{ass:scale}
\end{assumption}
Since $w_{ii}\le 1$ is immediately derived from Assumption \ref{ass:bound}, Assumption \ref{ass:scale} means that $c<w_{ii}\le1$.

Although Assumption \ref{ass:bound} and \ref{ass:scale} are too strong seemingly, they should be assumed for successful regression, in either the classical or the quantum way, for the following reasons.
First, we note that Assumption \ref{ass:bound} is satisfied if we know the bounds for $x_k^{(i)}$ and $y_k$ in advance and can rescale them.
That is, for $i\in \{1,...,d\}$, although in general $x^{(i)}_k$'s are not in $[0,1]$, we can redefine
\begin{equation}
\tilde{x}^{(i)}_k := \frac{x^{(i)}_k-L_i}{U_i - L_i}
\end{equation}
as $x^{(i)}_k$ if we know $L_i,U_i$ such that
\begin{equation}
\forall k\in\{1,...,N_D\}, L_i \le x^{(i)}_k \le U_i,
\end{equation}
Then, redefined $x^{(i)}_k$'s are in $[0,1]$.
Similarly, redefining
\begin{equation}
\tilde{y}_k:=\frac{y_k-L_y}{U_y - L_y}
\end{equation}
as $y_k$ with $L_y,U_y$ such that
\begin{equation}
\forall k\in\{1,...,N_D\}, L_y \le y_k \le U_y
\end{equation} 
leads to $0\le y_k \le 1$.
Assumption \ref{ass:bound} is then satisfied.
Besides, if we know the bounds which are not too far from the typical scale of the original $x^{(i)}_k$'s, that is, if we can take $L_i,U_i$ such that $|L_i|\sim |U_i| \sim |x^{(i)}_k|$ for most $k$'s, Assumption \ref{ass:scale} is satisfied.
In summary, Assumption \ref{ass:bound} and \ref{ass:scale} are naturally satisfied if we know the typical scales of $x^{(i)}_k$'s and $y_k$'s.
Practically, we {\it must} know the typical scales, since we have to address the problem of {\it outliers}.
Data sets often contain points whose explanatory and/or objective variables are much larger than those of others, because of various reasons (for example, misrecord).
Such points are called outliers.
It is widely known that outliers lead to inaccurate regression and so we have to address them.
Typically, we omit them from data points used for regression or replace the values of the explanatory and/or objective variables out of some range with the upper or lower bound of the range.
For such a preprocess, we have to know the typical scales of the variables.
In fact, the previous paper \cite{Wang} makes assumptions similar to Assumption \ref{ass:bound} and \ref{ass:scale}.
Mentioning the necessity of preprocessing outliers, it assumes that the design matrix and the objective variable vector do not contain the extraordinarily large elements.

\subsection{Details of our method}

We now explain our method in detail.
First, we present a lemma on the error in the solution of a system of linear equations where coefficients and constant terms contain errors.
\begin{lemma}
	Let Assumptions 2 to 4 be satisfied.
	For given symmetric $\hat{W}\in \mathbb{R}^{d\times d}$ and $\vec{\hat{z}}\in \mathbb{R}^d$, consider a system of linear equations
	\begin{equation}
	\hat{W}\vec{\hat{a}}=\vec{\hat{z}}, \label{eq:eqForhata}
	\end{equation}
	where $\vec{\hat{a}}\in \mathbb{R}^d$.
    For a given $\epsilon>0$, if each element of $\delta W:=\hat{W}-W$ and $\delta\vec{z}:=\vec{\hat{z}}-\vec{z}$ has an absolute value smaller than $\epsilon^\prime$ such that
    \begin{equation}
    	\epsilon^\prime < \min \left\{\frac{c}{d\kappa^2}, \frac{c^2\epsilon}{2d^{3/2}\kappa^4}\right\} \label{eq:epsPrimeCond}
    \end{equation}
    then 
    $\vec{\hat{a}}$ is uniquely determined by solving Eq. (\ref{eq:eqForhata}), i.e., $\vec{\hat{a}}=\hat{W}^{-1}\vec{\hat{z}}$, and it becomes an $O(\epsilon)$-additive approximation of $\vec{a}$, which means that
    \begin{equation}
    \|\vec{\hat{a}} - \vec{a}\|_\infty = O\left(\epsilon\right). \label{eq:erreval}
    \end{equation}
	\label{lem}
\end{lemma}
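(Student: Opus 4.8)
The plan is to control $\|\hat a - a\|_\infty$ by first controlling $\|\hat a - a\|$ in the Euclidean norm (using that the max norm is dominated by the Euclidean norm) and to set up the standard perturbation-theory bound for linear systems. I would write $\hat W \hat a = \hat z$ as $(W + \delta W)\hat a = z + \delta z$, and use $Wa = z$ to derive $W(\hat a - a) = \delta z - \delta W \hat a$, hence $\hat a - a = W^{-1}(\delta z - \delta W \hat a)$. The first thing I need, however, is to guarantee that $\hat W$ is actually invertible, since this is asserted in the statement. For that I would bound $\|\delta W\|$ from $\|\delta W\|_\infty < \epsilon'$ (via $\|\delta W\| \le \|\delta W\|_F \le d\,\epsilon'$, using the earlier-stated inequalities relating the spectral, Frobenius, and max norms for a $d\times d$ matrix), bound $\|W^{-1}\|$ in terms of $\kappa$ and $c$, and then invoke the Neumann-series / standard argument: if $\|W^{-1}\|\,\|\delta W\| < 1$ then $\hat W = W(I + W^{-1}\delta W)$ is invertible. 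This is exactly where the first branch of the hypothesis $\epsilon' < c/(d\kappa^2)$ is consumed.

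**Bounding $\|W^{-1}\|$.** I would establish that $\|W^{-1}\| = 1/\sigma_{\min}(W) \le \kappa(W)/\sigma_{\max}(W)$ and then relate $\kappa(W)$ and $\sigma_{\max}(W)$ to the given quantities: since $W = \tfrac1{N_D}X^TX$, its singular values are $\sigma_i(X)^2/N_D$, so $\kappa(W) = \kappa(X)^2 = \kappa^2$, and $\sigma_{\max}(W) = \|W\| \ge w_{ii} > c$ for any $i$ (the diagonal entry is bounded below by Assumption~4, and a diagonal entry of a symmetric PSD matrix is at most its spectral norm). Combining, $\|W^{-1}\| \le \kappa^2/c$. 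With $\|\delta W\| \le d\epsilon'$ and $\epsilon' < c/(d\kappa^2)$ we get $\|W^{-1}\delta W\| < 1$, and moreover $\|\hat W^{-1}\| \le \|W^{-1}\|/(1 - \|W^{-1}\delta W\|)$, which for $\epsilon'$ small (say using the $c/(d\kappa^2)$ bound with a factor of $2$, or simply noting the second, tighter bound on $\epsilon'$ forces the correction term to be $\le \tfrac12$) gives $\|\hat W^{-1}\| = O(\kappa^2/c)$.

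**Propagating the error.** Next I would bound $\|a\|$: from $a = W^{-1}z$ and $\|z\|_\infty \le 1$ (each $z_i$ is an average of products of quantities in $[0,1]$, so $z_i \in [0,1]$), we get $\|z\| \le \sqrt d$ and $\|a\| \le \|W^{-1}\|\sqrt d = O(\sqrt d\,\kappa^2/c)$, hence also $\|\hat a\| \le \|a\| + \|\hat a - a\|$, and once we know $\|\hat a - a\|$ is small this is $O(\sqrt d\,\kappa^2/c)$ as well. Then from $\hat a - a = W^{-1}(\delta z - \delta W\hat a)$ I would bound $\|\delta z\| \le \sqrt d\,\epsilon'$ and $\|\delta W\,\hat a\| \le \|\delta W\|\,\|\hat a\| \le d\epsilon' \cdot O(\sqrt d\,\kappa^2/c)$, so $\|\hat a - a\| \le \|W^{-1}\| \cdot O(d^{3/2}\kappa^2\epsilon'/c) = O(d^{3/2}\kappa^4\epsilon'/c^2)$. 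Finally $\|\hat a - a\|_\infty \le \|\hat a - a\| = O(d^{3/2}\kappa^4\epsilon'/c^2)$, and plugging in the second branch $\epsilon' < c^2\epsilon/(2d^{3/2}\kappa^4)$ makes this $O(\epsilon)$, which is the claimed bound~(\ref{eq:erreval}); here $c$ is absorbed into the $O(\cdot)$ since it is a fixed constant by Assumption~4. One small circularity to handle cleanly: the bound on $\|\hat a - a\|$ uses $\|\hat a\|$, which a priori needs $\|\hat a - a\|$; I would resolve this by first using the crude bound $\|\hat a\| = \|\hat W^{-1}\hat z\| \le \|\hat W^{-1}\|(\|z\|+\|\delta z\|) = O(\sqrt d\,\kappa^2/c)$ directly, avoiding the loop entirely.

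**Main obstacle.** The routine parts are the Neumann-series invertibility argument and the norm chase. The one place that needs genuine care is making the two branches of the hypothesis~(\ref{eq:epsPrimeCond}) do exactly the right jobs — the first to secure invertibility of $\hat W$ and a uniform $O(1)$ bound on $\|\hat W^{-1}\|/\|W^{-1}\|$, the second to drive the final error to $O(\epsilon)$ — and keeping track of which powers of $d$ and $\kappa$ come from which step (the $d^{3/2}$ from $\|\delta W\|\le d\epsilon'$ times $\|\hat a\| = O(\sqrt d)$, the $\kappa^4$ from $\|W^{-1}\|$ appearing once in the bound on $\|\hat a\|$ and once more outside). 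Getting the bookkeeping to land precisely on the exponents in~(\ref{eq:epsPrimeCond}) is the crux; everything else is standard.
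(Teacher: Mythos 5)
Your proof is correct and lands on the same key estimates as the paper ($\|W\|>c$, $\|W^{-1}\|\le\kappa^2/c$ equivalently $\lambda_{\min}>c/\kappa^2$, $\|\delta W\|\le\|\delta W\|_F\le d\epsilon^\prime$, $\|\vec{z}\|\le\sqrt{d}$, and the final $\|\vec{\hat a}-\vec a\|=O(d^{3/2}\kappa^4\epsilon^\prime/c^2)$), but it differs from the paper's proof in two technical choices. For invertibility, the paper uses eigenvalue perturbation (Weyl's inequality, $|\hat\lambda_{\min}-\lambda_{\min}|\le\|\delta W\|$) to show $\hat W$ stays positive-definite, whereas you use the Neumann-series criterion $\|W^{-1}\delta W\|<1$; both consume the first branch of the hypothesis in the same way, though the paper's version additionally yields positive-definiteness rather than mere invertibility. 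For the error bound, the paper invokes the textbook first-order relative-perturbation formula $\|\vec{\hat a}-\vec a\|/\|\vec a\|\le\kappa(W)\bigl(\|\delta W\|/\|W\|+\|\delta\vec z\|/\|\vec z\|\bigr)+O((\epsilon^\prime)^2)$ and carries an unquantified $O((\epsilon^\prime)^2)$ remainder to the end, whereas you work from the exact identity $W(\vec{\hat a}-\vec a)=\delta\vec z-\delta W\vec{\hat a}$ and close the loop on $\|\vec{\hat a}\|$ via the crude bound $\|\hat W^{-1}\|(\|\vec z\|+\|\delta\vec z\|)$. Your route is arguably cleaner because it is non-asymptotic; its only soft spot is the step where you need $\|\hat W^{-1}\|=O(\kappa^2/c)$, which requires $\|W^{-1}\delta W\|$ bounded away from $1$ rather than merely below $1$ --- you flag this and resolve it correctly by observing that the second (tighter) branch of the hypothesis forces the correction term to be at most $1/2$ for any reasonable $\epsilon$, and the $O(\epsilon)$ conclusion absorbs the constant in any case.
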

The proof is given in Appendix \ref{sec:app}.
This lemma means that, if we want a solution of a system of linear equations with an $O(\epsilon)$-additive error, it is sufficient to calculate coefficients and constant terms with an additive error $\epsilon^\prime$ satisfying (\ref{eq:epsPrimeCond}).

We therefore propose the following method for linear regression: we first estimate $W$ in (\ref{eq:w}) and $\vec{z}$ in (\ref{eq:z}) by a quantum method with $\epsilon^\prime$-additive error and then calculate (\ref{eq:LSSolScale}) by some classical method.
Since classical methods basically introduces no additional error, we can obtain a solution with $O(\epsilon)$-additive error.

Then, we state a theorem on the complexity of our method, presenting the concrete procedure in the proof.

\begin{theorem}
	Given $\epsilon>0$, accesses to oracles $P_x$ and $P_y$ which satisfy Assumptions 1, and $\{x^{(i)}_k\}_{\substack{i=1,...,d \\ k=1,...,N_D}},\{y_k\}_{k=1,...,N_D}$ which satisfy Assumption \ref{ass:fullrank} to \ref{ass:scale}, there is a quantum algorithm that makes
	\begin{equation}
	O\left(\max\left\{\frac{d^{3/2}\kappa^4}{\epsilon},d\kappa^2\right\}\times d^2\log(d)\right) \label{eq:numPx}
	\end{equation}
	uses of $P_x$ and
	\begin{equation}
	O\left(\max\left\{\frac{d^{3/2}\kappa^4}{\epsilon},d\kappa^2\right\}\times d\log(d)\right) \label{eq:numPy}
	\end{equation}
	uses of $P_y$ and, with a probability larger than 99\%, outputs an $O(\epsilon)$-additive approximation of $\vec{a}$, which is defined as (\ref{eq:LSSolScale}).
\end{theorem}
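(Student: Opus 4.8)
The plan is to combine Lemma \ref{lem} with QAE-based estimation of the averages $w_{ij}$ and $z_i$. By Lemma \ref{lem}, it suffices to produce a symmetric $\hat{W}$ and a vector $\vec{\hat{z}}$ whose entries approximate those of $W$ and $\vec{z}$ within an additive error $\epsilon'$ satisfying (\ref{eq:epsPrimeCond}); since $c$ is a fixed constant independent of $N_D,\epsilon,\kappa,d$, such an $\epsilon'$ can be chosen with $1/\epsilon' = O\left(\max\left\{d\kappa^2,\, d^{3/2}\kappa^4/\epsilon\right\}\right)$. Solving $\hat{W}\vec{\hat{a}}=\vec{\hat{z}}$ classically (for instance by Gaussian elimination, in $O(d^3)$ time and with no further oracle calls) then yields $\vec{\hat{a}}$ with $\|\vec{\hat{a}}-\vec{a}\|_\infty=O(\epsilon)$. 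Hence the whole task reduces to estimating every $w_{ij}$ and $z_i$ to additive accuracy $\epsilon'$.

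For a fixed pair $(i,j)$ I would build a unitary $A_{ij}$ which, starting from $\ket{0}$, prepares the uniform superposition $\frac{1}{\sqrt{N_D}}\sum_{k=1}^{N_D}\ket{k}$ over the data index, applies $P_x$ twice (with an index register set first to $\ket{i}$ and then to $\ket{j}$) to load $x^{(i)}_k$ and $x^{(j)}_k$, uses quantum arithmetic to compute the product $p_k:=x^{(i)}_k x^{(j)}_k$ and the angle $\theta_k:=\arcsin\sqrt{p_k}$ into ancilla registers, and finally applies the controlled rotation $R_y(2\theta_k)$ to a flag qubit. By Assumption \ref{ass:bound} we have $p_k\in[0,1]$, so $\theta_k$ is well defined and the flag qubit ends up in $\ket{1}$ with probability $\frac{1}{N_D}\sum_k p_k = w_{ij}$. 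Taking the flag qubit being $\ket{1}$ as the marked subspace, QAE returns an estimate $\hat{a}$ of the amplitude $\sqrt{w_{ij}}$; since (after clipping $\hat{a}$ to $[0,1]$) $|\hat{a}^2-w_{ij}|\le(|\hat{a}|+\sqrt{w_{ij}})\,|\hat{a}-\sqrt{w_{ij}}|\le 2\,|\hat{a}-\sqrt{w_{ij}}|$, an amplitude error $\epsilon'/2$ yields $|\hat{a}^2-w_{ij}|\le\epsilon'$, and by (\ref{eq:QAEcompRep}) this costs $O(1/\epsilon')$ calls to $A_{ij}$ (the per-estimation success-probability factor is incorporated below), hence $O(1/\epsilon')$ calls to $P_x$. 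The arithmetic registers need only $O(\log(1/\epsilon'))$ qubits for the discretization error to remain below $\epsilon'$, so the overhead of $A_{ij}$ beyond its $P_x$ queries is polylogarithmic and is absorbed into the $O(\cdot)$. The $z_i$ are estimated identically, replacing the second $P_x$ call by a call to $P_y$ and using $x^{(i)}_k y_k\in[0,1]$.

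It remains to fix the failure probabilities and count the queries. I would estimate each of the $d(d+1)/2=O(d^2)$ entries $w_{ij}$ with $i\le j$ and each of the $d$ entries $z_i$, setting $\hat{w}_{ji}:=\hat{w}_{ij}$ so that $\hat{W}$ is symmetric as Lemma \ref{lem} requires. Running each QAE estimation so that its failure probability is at most $\Theta(1/d^2)$ — which by (\ref{eq:QAEcompRep}) costs only an extra $O(\log d)$ factor via the median-of-runs trick — a union bound over the $O(d^2)$ estimations keeps the overall failure probability below $1\%$. Multiplying the factors, the number of $P_x$ queries is $O(d^2)\times O(1/\epsilon')\times O(\log d)=O(d^2\log(d)/\epsilon')$ (the $z_i$, which also query $P_x$, contribute only the lower-order $O(d\log(d)/\epsilon')$), and the number of $P_y$ queries is $O(d\log(d)/\epsilon')$. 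Substituting $1/\epsilon'=O\left(\max\left\{d\kappa^2,\, d^{3/2}\kappa^4/\epsilon\right\}\right)$ reproduces (\ref{eq:numPx}) and (\ref{eq:numPy}).

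The main obstacle is essentially the construction and cost analysis of $A_{ij}$: one must verify that the quantum arithmetic for the product and for $\arcsin\sqrt{\cdot}$ can be carried out with polylogarithmically many qubits and gates, so that the encoded success probability equals $w_{ij}$ up to an error comfortably below $\epsilon'$ and so that the extra cost does not spoil the $O(1/\epsilon')$ query scaling of QAE. Everything else — the amplitude-to-probability error propagation, the union bound over the $O(d^2)$ estimations, and the substitution of $\epsilon'$ from Lemma \ref{lem} — is routine bookkeeping.
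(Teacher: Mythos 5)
Your proposal is correct and follows essentially the same route as the paper: a QAE estimation of each $w_{ij}$ and $z_i$ via a state-preparation unitary that loads a uniform superposition over $k$, queries $P_x$ (and $P_y$) to rotate a flag qubit with success probability $x^{(i)}_k x^{(j)}_k$, with the target accuracy $\epsilon'$ dictated by Lemma \ref{lem}, an $O(\log d)$ median-of-repetitions plus union bound over the $O(d^2)$ estimations, and a final classical linear solve. The only cosmetic difference is that you estimate the amplitude $\sqrt{w_{ij}}$ and square it, whereas the paper treats QAE as returning the probability $w_{ij}$ directly; both yield the same $O(1/\epsilon')$ query count.
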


\begin{proof}
The outline of the algorithm is as follows.
\begin{enumerate}
	\item Estimate the elements of $W$ and $\vec{z}$ in (\ref{eq:LSSolScale}) using a quantum algorithm based on QAE. Let the matrix and the vector consisting of the estimation results be $\hat{W}$ and $\vec{\hat{z}}$.
	\item Solve $\hat{W}\vec{a}=\vec{\hat{z}}$ by some classical solver of systems of linear equations (for example, row reduction). Let the solution be $\vec{\hat{a}}$. This is an output of our algorithm.
\end{enumerate}

Since we do not use the oracles $P_x,P_y$ in the second step, we focus on the step 1.
As we explained in Sec. \ref{sec:QAE}, we can obtain an estimation $\hat{w}_{ij}$ of $w_{ij}$ by QAE if we construct the following operators $A_{ij}$.
$A_{ij}$ transforms $\ket{0}$, a state in which all qubits are 0, to a state in the form of
\begin{equation}
\sqrt{w_{ij}}\ket{\psi} + \sqrt{1-w_{ij}}\ket{\psi_\bot}, \label{eq:stateforQAE}
\end{equation}
where $\ket{\psi}$ and $\ket{\psi_\bot}$ are some orthogonal states.
Such an operator is constructed as follows.
\begin{enumerate}
\renewcommand{\labelenumi}{(\roman{enumi})}
\item Prepare quantum registers $R_1,...,R_5$, which have enough qubits, and a single qubit register $R_6$. Set $R_1,R_2$ and the others to $\ket{i},\ket{j}$ and $\ket{0}$, respectively. \label{step1}
\item Create $\frac{1}{\sqrt{N_D}}\sum_{k=1}^{N_D}\ket{k}$, that is, an equiprobable superposition of $\ket{1},...,\ket{N_D}$ on $R_3$.
\item 
Apply $P_x$ to a block of $R_1, R_3$ and $R_4$, which outputs $x_k^{(i)}$ on $R_4$. Similarly, apply $P_x$ to a block of $R_2, R_3$ and $R_5$, which outputs $x_k^{(j)}$ on $R_5$.
\item Using $x^{(i)}_k$ on $R_4$ and $x^{(j)}_k$ on $R_5$, transform $R_6$ from $\ket{0}$ to $\left(\sqrt{1-x^{(i)}_kx^{(j)}_k}\ket{0}+\sqrt{x^{(i)}_kx^{(j)}_k}\ket{1}\right)$ by some arithmetic circuits and controlled rotations\footnote{More concretely, we can perform the following calculation:
\begin{eqnarray}
	&&\ket{x^{(i)}_k}\ket{x^{(j)}_k}\ket{0}\ket{0}\ket{0} \nonumber \\
	&\rightarrow& \ket{x^{(i)}_k}\ket{x^{(j)}_k}\ket{x^{(i)}_kx^{(j)}_k}\ket{0}\ket{0} \nonumber \\
	&\rightarrow& \ket{x^{(i)}_k}\ket{x^{(j)}_k}\ket{x^{(i)}_kx^{(j)}_k}\Ket{\sqrt{x^{(i)}_kx^{(j)}_k}}\ket{0} \nonumber \\
	&\rightarrow& \ket{x^{(i)}_k}\ket{x^{(j)}_k}\ket{x^{(i)}_kx^{(j)}_k}\Ket{\sqrt{x^{(i)}_kx^{(j)}_k}}\left(\sqrt{1-x^{(i)}_kx^{(j)}_k}\ket{0}+\sqrt{x^{(i)}_kx^{(j)}_k}\ket{1}\right), \nonumber
\end{eqnarray}
where the first, second, and last kets correspond to $R_4$, $R_5$, and $R_6$, respectively, and the third and fourth kets correspond to two ancillary registers, which are not displayed in (\ref{eq:flowA}). We here used the multiplication circuit (e.g. \cite{MunozCoreas}) at the first arrow, the circuit for square root (e.g. \cite{MunozCoreas2}) at the second arrow, and the rotation gate on $R_6$ controlled by the second ancillary register at the last arrow. 
}. Then, the resultant state is in the form of (\ref{eq:stateforQAE}).
\end{enumerate}
Through the steps (i) to (iv), the state is transformed as follows.
\begin{eqnarray}
& & \ket{i}\ket{j}\ket{0}\ket{0}\ket{0}\ket{0} \nonumber \\
& \xrightarrow{{\rm (ii)}} & \ket{i}\ket{j}\left(\frac{1}{\sqrt{N_D}}\sum_{k=0}^{N_D}\ket{k}\right)\ket{0}\ket{0}\ket{0} \nonumber \\
& \xrightarrow{{\rm (iii)}} & \ket{i}\ket{j}\left(\frac{1}{\sqrt{N_D}}\sum_{k=0}^{N_D}\ket{k}\ket{x^{(i)}_k}\ket{x^{(j)}_k}\right)\ket{0} \nonumber \\
& \xrightarrow{{\rm (iv)}} & \ket{i}\ket{j}\left[\frac{1}{\sqrt{N_D}}\sum_{k=0}^{N_D}\ket{k}\ket{x^{(i)}_k}\ket{x^{(j)}_k}\right. \nonumber \\
& & \qquad \qquad \qquad \quad \otimes\left.\left(\sqrt{1-x^{(i)}_kx^{(j)}_k}\ket{0}+\sqrt{x^{(i)}_kx^{(j)}_k}\ket{1}\right)\right]  \nonumber \\
& & \label{eq:flowA}
\end{eqnarray}
In order to estimate $w_{ij}$ with an additive error $\epsilon^\prime$, QAE makes $O\left(\frac{1}{\epsilon^\prime}\right)$ uses of $A_{ij}$.
In the current case, $A_{ij}$ contains $O(1)$ uses of $P_x$ (precisely, two uses).
In total, we use it $O\left(\frac{1}{\epsilon^\prime}\right)$ times in the whole process of the QAE.

We can obtain an estimation $\hat{z}_i$ of $z_i$ in the similar way by replacing one of two $P_x$'s with $P_y$ and $x^{(j)}_k$ with $y_k$.

As shown in Lemma \ref{lem}, in order to obtain an $\epsilon$-additive approximation of $\vec{a}$, it is sufficient to take $\epsilon^\prime$ satisfying (\ref{eq:epsPrimeCond}).
Therefore, the number of calling $P_x,P_y$ in one QAE for $w_{ij}$ or $z_i$ is
\begin{equation}
O\left(\max\left\{\frac{d^{3/2}\kappa^4}{\epsilon},d\kappa^2\right\}\right). \label{eq:CompOneQAE}
\end{equation}
Here, we omit $c$ since it is independent of $N_D, \kappa$, $d$ (Assumption \ref{ass:scale}) and of course $\epsilon$.

Then, let us recall the issue of the success probability, which is mentioned in Sec. {\ref{sec:QAE}}.
Since the numbers of $w_{ij}$'s and $z_i$'s are $d(d+1)/2$ and $d$, respectively, in order to successfully estimate all of them with a probability larger than $0.99=1-0.01$, it suffices to make the success probability of each estimation larger than $1-\frac{0.01}{d^2}$.
Therefore, it is sufficient to repeat QAE $O\left(\log\left(\frac{d^2}{0.01}\right)\right)=O\left(\log(d)\right)$ times in estimating one of $w_{ij}$'s or $z_i$'s.
Combining this point and (\ref{eq:CompOneQAE}), and noting the numbers of $w_{ij}$'s and $z_i$'s, we see that the total number of calls to $P_x$ and $P_y$ in the whole algorithm are (\ref{eq:numPx}) and (\ref{eq:numPy}), respectively.

\end{proof}

\subsection{Time complexity}

Here, let us make some comments on overall {\it time complexity}.
Although we saw that we can reduce the {\it query complexity}, that is, the number of calls to $P_x$ and $P_y$, exponentially with respect to $N_D$ compared to the naive classical method, the time for one such query might depend on $N_D$.
Fortunately, in the following examples, the time complexity of one query is at most logarithmic with respect to $N_D$, which means that our method provides speed-up also in time complexity.

We first consider LSM, where the explanatory and objective variables are not given as exogenous data, but are produced by a simulation.
Concretely, $P_x$ is the oracle to generate sample paths of time evolution of asset prices, $P_y$ is the oracle to calculate the cashflows on paths, and $N_D$ is the number of the paths.
Typically, we generate asset price paths based on some stochastic differential equation, using random numbers.
The implementation of such a calculation as a quantum circuit has already been considered~\cite{Rebentrost,Stamatopoulos,Kaneko}.
In such a circuit, we create and use random numbers with $O(\log (N_D))$ digits in order to generate $N_D$ different paths.
The gate number for creating such numbers and arithmetic operations on them is $O({\rm polylog}(N_D))$, and so is the time complexity.
We can summarize this situation as follows: we aim to approximate some complex function by linear regression, and the quantum circuits which calculate the explanatory and objective variables with $O({\rm polylog}(N_D))$ time complexity are available.
We expect that this is not unique to LSM.

Next, let us consider the cases where we use a QRAM to implement the oracles.
According to \cite{Giovannetti}, once we are given the QRAM containing $N_D$ data points, we can load them into a register with $O({\rm polylog}(N_D))$ time complexity per query.
Thus, the cost of querying the oracle is negligible.
However, preparing such a QRAM also takes time, which must be taken into account.
The reasonable assumption is that we are initially given the data points as classical data, and load them into the QRAM, which takes $O(N_D)$ time complexity.
Even if we consider the QRAM initialization time, in practical uses of linear regression, our quantum method can provide a reduction of time complexity in the entire process of data analysis.
For example, consider the situation where we explore an optimal combination of preprocessing, such as one-hot encoding and logarithmic transformation, by trial and error.
Such transformations, by which we aim to enhance explanatory power of the variables, are ubiquitous in data analysis, and we often repeat regression for one data set applying various patterns of transformation and compare the results.
In this situation, the costliest part of the total calculation is a series of classical regression in many trials for the classical method, but a single QRAM initialization at the very beginning for the our method, which will be less time-consuming.
Therefore, we conclude that the proposed method is totally faster than the naive classical method.

\section{\label{sec:extension}Extention to a class of convex optimization}

\subsection{Linear regression as optimization by Newton's method}

In this section, we extend our method for linear regression to more general optimization problems.
Before this, we firstly present another interpretation of (\ref{eq:LSSol}), the formula for the solution of linear regression.
We regard linear regression as an optimization problem of
\begin{equation}
F(\vec{a}) = \frac{1}{2N_D}\| \vec{y}-X\vec{a}\|^2 = \frac{1}{2N_D}\sum_{k=1}^{N_D} \left(\sum_{i=1}^da_ix^{(i)}_k-y_k\right)^2. \label{eq:objFuncLS}
\end{equation}
This can be rewritten as
\begin{equation}
F(\vec{a}) = \frac{1}{N_D}\sum_{k=1}^{N_D} f(\vec{a};\vec{x}_k,y_k),
\end{equation}
where
\begin{equation}
f(\vec{a};\vec{x}_k,y_k) = \frac{1}{2}\left(\sum_{i=1}^da_ix^{(i)}_k-y_k\right)^2.
\end{equation}
The first and second derivatives of $F$ are
\begin{equation}
\frac{\partial}{\partial a_i}F(\vec{a})  = \frac{1}{N_D} \sum_{k=1}^{N_D} \frac{\partial}{\partial a_i}f(\vec{a};\vec{x}_k,y_k)= \frac{1}{N_D}\sum_{k=1}^{N_D} x_k^{(i)}\left(\sum_{j=1}^{d}a_jx_k^{(j)}-y_k\right),
\end{equation}
\begin{equation}
\frac{\partial^2}{\partial a_i \partial a_j}F(\vec{a}) = \frac{1}{N_D}\sum_{k=1}^{N_D} \frac{\partial^2}{\partial a_i \partial a_j}f(\vec{a};\vec{x}_k,y_k)=\frac{1}{N_D}\sum_{k=1}^{N_D} x_k^{(i)}x_k^{(j)},
\end{equation}
respectively.
This means that, $W=\frac{1}{N_D}X^TX$ and $-\vec{z} = -\frac{1}{N_D}X^T\vec{y}$ are the Hessian matrix and the gradient vector of $F$ at $\vec{a}=\vec{0}$, respectively.
Besides, the updating formula in Newton's method is
\begin{equation}
\vec{a}_{n+1} = \vec{a}_n-H_F^{-1}(\vec{a_n})\vec{g}_F(\vec{a}_n), \label{eq:Newton}
\end{equation}
where $H_F$ and $\vec{g}_F$ are the Hessian and the gradient of the function $F$, respectively, and $\vec{a}_n$ is the optimization variable after the $n$-th update.
Then, we can interpret (\ref{eq:LSSol}) as an one-time update in Newton's method from the initial point $\vec{a}_0=\vec{0}$.
Note that Newton's method gives the exact solution by only one update from any initial point, if the objective function is quadratic.

In summary, we can consider our method as optimization of the objective function (\ref{eq:objFuncLS}) by Newton's method, where calculation of the gradient and the Hessian, which is time-consuming in the classical method, is done by the QAE-based method.

\subsection{Extension of our method : the QAE-based Newton's method}

On the basis of the above discussion, it is now straightforward to extend our method to a more general class of optimization problems.
That is, keeping the updating formula (\ref{eq:Newton}), we can perform Newton's method based on the gradient and the Hessian estimated by QAE.

Concretely, we consider an optimization problem in which the objective function can be written as a sum of the values of some function with different inputs:
\begin{equation}
F(\vec{a}) = \frac{1}{N_D}\sum_{k=1}^{N_D} f(\vec{a},\vec{c}_k). \label{eq:objFunc}
\end{equation}
Here, $f$ is the real-valued twice-differentiable function, which are shared by all the terms.
Its inputs are the optimization variables $\vec{a}\in \mathbb{R}^{d}$ and some parameters $\vec{c}_k$, which are different in each term.
Some conditions on $F$ necessary for convergence analysis are given in Sec. \ref{sec:convAnal}.
It is obvious that the objective functions (\ref{eq:objFuncLS}) fall into (\ref{eq:objFunc}).
For the objective function like (\ref{eq:objFunc}), the gradient $\vec{g}_F(\vec{a})=(g_{F,1}(\vec{a}),...,g_{F,d}(\vec{a}))^T$ and the Hessian $H_F(\vec{a})=(h_{F,ij}(\vec{a}))_{\substack{1\le i \le d \\ 1\le j \le d}}$ are given as
\begin{equation}
g_{F,i}(\vec{a}) = \frac{\partial}{\partial a_i}F(\vec{a}) = \frac{1}{N_D}\sum_{k=1}^{N_D} f_i(\vec{a},\vec{c}_k),
\end{equation}
\begin{equation}
h_{F,ij}(\vec{a}) = \frac{\partial^2}{\partial a_i\partial a_i}F(\vec{a}) = \frac{1}{N_D}\sum_{k=1}^{N_D} f_{ij}(\vec{a},\vec{c}_k),
\end{equation}
where we simply write $\frac{\partial}{\partial a_i}f$ and $\frac{\partial^2}{\partial a_i \partial a_j}f$ as $f_i$ and $f_{ij}$, respectively.

Then, we estimate the gradient and the Hessian as follows.
We assume the availability of the followings:
\begin{itemize}
	\item $P_c$, which outputs $\vec{c}_k$ for given $k$:
	\begin{equation}
	P_c:\ket{k}\ket{0}\mapsto \ket{k}\ket{\vec{c}_k}.
	\end{equation}
	This can be constructed by QRAM.
	
	\item $P_{i}$ for $i=1,..,d$, which outputs $f_i(\vec{a},\vec{c}_k)$ for given $\vec{a}$ and $\vec{c}_k$:
	\begin{equation}
	P_{i}:\ket{\vec{a}}\ket{\vec{c}_k}\ket{0} \mapsto \ket{\vec{a}}\ket{\vec{c}_k}\ket{f_i(\vec{a},\vec{c}_k)}.
	\end{equation}
	
	\item $P_{ij}$ for $i,j=1,...,d$, which outputs $f_{ij}(\vec{a},\vec{c}_k)$ for given $\vec{a}$ and $\vec{c}_k$:
	\begin{equation}
	P_{ij}:\ket{\vec{a}}\ket{\vec{c}_k}\ket{0} \mapsto \ket{\vec{a}}\ket{\vec{c}_k}\ket{f_{ij}(\vec{a},\vec{c}_k)}.
	\end{equation}
\end{itemize}
Then, preparing appropriate registers, we can perform the following computation:
\begin{eqnarray}
& & \ket{\vec{a}}\ket{0}\ket{0}\ket{0}\ket{0} \nonumber \\
& \rightarrow & \left(\frac{1}{\sqrt{N_D}}\sum_{k=0}^{N_D}\ket{\vec{a}}\ket{k}\right)\ket{0}\ket{0}\ket{0} \nonumber \\
& \rightarrow & \left(\frac{1}{\sqrt{N_D}}\sum_{k=0}^{N_D}\ket{\vec{a}}\ket{k}\ket{\vec{c}_k}\right)\ket{0}\ket{0} \nonumber \\
& \rightarrow & \left(\frac{1}{\sqrt{N_D}}\sum_{k=0}^{N_D}\ket{\vec{a}}\ket{k}\ket{\vec{c}_k}\Ket{f_i(\vec{a};\vec{c}_k)}\right)\ket{0} \nonumber \\
& \rightarrow & \frac{1}{\sqrt{N_D}}\sum_{k=0}^{N_D}\ket{\vec{a}}\ket{k}\ket{\vec{c}_k}\Ket{f_i(\vec{a};\vec{c}_k)} \nonumber \\
& & \qquad \qquad \quad \otimes \left(\sqrt{1-f_i(\vec{a};\vec{c}_k)}\ket{0}+\sqrt{f_i(\vec{a};\vec{c}_k)}\ket{1}\right), \label{eq:flowLDeriv}
\end{eqnarray}
where $P_c$ and $P_i$ are used at the second and third arrows, respectively.
We then obtain an estimation of $g_i(\vec{a})$ by estimating the probability that the last qubit takes 1 by QAE.
We can also estimate $h_{ij}(\vec{a})$ similarly, replacing $P_i$ with $P_{ij}$ and $f_i(\vec{a},\vec{c}_k)$ with $f_{ij}(\vec{a},\vec{c}_k)$.
Again, in order to estimate one $g_i(\vec{a})$ or $h_{ij}(\vec{a})$ with $\epsilon$-additive error, the number of calling $P_i$ and $P_{ij}$ is at most $O(1/\epsilon)$, which means the exponential speedup with respect to $N_D$ compared with classical iterative calculation.

Using the gradient and the Hessian estimated as above, we update $\vec{a}_n$ to $\vec{a}_{n+1}$ similarly to (\ref{eq:Newton}), that is,
\begin{equation}
\vec{a}_{n+1} = \vec{a}_n-\hat{H}_F^{-1}(\vec{a_n})\vec{\hat{g}}_F(\vec{a}_n), \label{eq:NewtonErr}
\end{equation}
where $\vec{\hat{g}}_F(\vec{a_n})$ and $\hat{H}_F(\vec{a_n})$ are the estimated gradient and Hessian, respectively.
After sufficiently many iterations, we obtain the approximated solution of the optimization.
Hereafter, we call this method the {\it QAE-based Newton's method}.

Note that $\hat{H}_F(\vec{a_n})$ must be invertible so that the update (\ref{eq:NewtonErr}) can be defined.
Hereafter, we consider the situation where the original Hessian $H_F(\vec{a_n})$ is positive-definite and therefore invertible.
In order to keep such a property, we have to obtain $\hat{H}_F(\vec{a_n})$ accurately enough.

\subsection{Convergence analysis of the QAE-based Newton's method\label{sec:convAnal}}

Let us estimate the complexity of the QAE-based Newton's method to obtain an approximated solution with $\epsilon$-additive error.
For a mathematically rigorous discussion, we first make some assumptions on the objective function $F:\mathbb{R}^d\rightarrow\mathbb{R}$.

\begin{assumption}
	$F$ is twice-differentiable. \label{ass:twiceDiff}
\end{assumption}
\noindent This is reproduced since we assumed that $f$ in (\ref{eq:objFunc}) is twice-differentiable.

\begin{assumption}
	$F$ is $\mu$-strongly convex, that is, there exists a positive number $\mu$ such that
	\begin{equation}
	\forall \vec{a},\vec{b}\in \mathbb{R}^d, F(\vec{a})\ge F(\vec{b}) + \vec{g}_F(\vec{b})\cdot(\vec{a}-\vec{b})+\frac{\mu}{2}\|\vec{a}-\vec{b}\|^2.
	\end{equation}
	\label{ass:StConv}
\end{assumption}
\noindent This assumption means that the eigenvalues of $H_F(\vec{a})$ are greater than or equal to $\mu$ for any $\vec{a}\in\mathbb{R}^d$, that is,
\begin{equation}
\forall \vec{a}\in\mathbb{R}^d, H_F(\vec{a}) \succeq \mu I_d, \label{eq:HEigenLB}
\end{equation}
where $I_d$ is the $d\times d$ identity matrix and for $A,B\in \mathbb{R}^d$, $A\succeq B$ means that $A-B$ is positive-semidefinite.
This immediately leads to
\begin{equation}
\forall a\in \mathbb{R}^d, \|(H_F(\vec{a}))^{-1}\|\le \frac{1}{\mu}. \label{eq:HInvEigenUB}
\end{equation}

\begin{assumption}
	$F$ has an $M$-Lipschitz Hessian, that is, 
	\begin{equation}
	\forall \vec{a},\vec{b}\in \mathbb{R}^d, \|H_F(\vec{a}) - H_F(\vec{b})\| \le M\|\vec{a}-\vec{b}\|.
	\end{equation}
	\label{ass:LipHess}
\end{assumption}
\noindent This assumption leads to the following inequality:
\begin{equation}
\forall \vec{a},\vec{b}\in \mathbb{R}^d, \|\vec{g}_F(\vec{a})-\vec{g}_F(\vec{b}) - H_F(\vec{b})(\vec{a}-\vec{b})\| \le \frac{M}{2}\|\vec{a}-\vec{b}\|^2. \label{eq:gDiffBound}
\end{equation}

Assumptions \ref{ass:twiceDiff} to \ref{ass:LipHess} are usually made in the discussion on convergence properties of the {\it ordinary} Newton's method, that is, cases where the gradient and the Hessian can be exactly computed in the classical way (see, for example, \cite{Nesterov}).
We do not make any additional assumptions for the QAE-based Newton's method.

Next, let us define some quantities.
Since QAE introduces errors in the estimated Hessian and gradient, we have to consider Newton's method in which the update difference contains some error.
For $\vec{a}\in\mathbb{R}^d$, we define $\vec{\Delta}_F(\vec{a})$ as
\begin{equation}
\vec{\Delta}_F(\vec{a}) := (\hat{H}_F(\vec{a}))^{-1}\vec{\hat{g}}_F(\vec{a})-(H_F(\vec{a}))^{-1}\vec{g}_F(\vec{a}). \label{eq:HgErr}
\end{equation}
Besides, we write the minimum\footnote{Since we are considering the convex optimization as stated in Assumption \ref{ass:StConv}, there is only one global minimum.} that we search as $\vec{a}^\star:=\argmin_{\vec{a}\in\mathbb{R}^d}F(\vec{a})$, the optimization variable after the $n$-th update as $\vec{a}_n$, and the difference between $\vec{a}^\star$ and $\vec{a}_n$ as $\delta_n:=\|\vec{a}_n-\vec{a}^\star\|$.

Then, we can show the following lemma, which is repeatedly used.

\begin{lemma}
	Let Assumptions \ref{ass:twiceDiff} to \ref{ass:LipHess} be satisfied.
	Then, for any non-negative number $\epsilon$ and any $\vec{a}\in\mathbb{R}^d$ such that
	\begin{equation}
		\|\vec{\Delta}_{F}(\vec{a})\| \le \epsilon, \label{eq:ErrOrd}
	\end{equation}
	the following inequality holds:
	\begin{equation}
	\delta^\prime \le \frac{M}{2\mu}\delta^2 + \epsilon, \label{eq:QAENErrConv}
	\end{equation}
	where $\delta:=\|\vec{a}-\vec{a}^\star\|$, $\delta^\prime:=\|\vec{a}^\prime-\vec{a}^\star\|$, and $\vec{a}^\prime=\vec{a}-(\hat{H}_F(\vec{a}))^{-1}\vec{\hat{g}}_F(\vec{a})$.
	Furthermore, when
	\begin{equation}
	\frac{2M}{\mu}\epsilon<1, \label{eq:epsCond}
	\end{equation}
	is satisfied, the following hold:
	\begin{eqnarray}
		&\delta^\prime < \delta & ; \ {\rm if} \ \delta_-<\delta<\delta_+ \\
		&\delta^\prime \le \delta_- & ; \ {\rm if} \ \delta\le\delta_-
	,\label{eq:delpr}
	\end{eqnarray}
	where  
	\begin{equation}
	\delta_\pm:=\frac{\mu}{M}\left(1\pm\sqrt{1-\frac{2M}{\mu}\epsilon}\right). \label{eq:deltapm}
	\end{equation}
	\label{lem:deltarel}
\end{lemma}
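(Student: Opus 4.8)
The plan is to reduce the inexact step to an exact Newton step plus the error term $\vec{\Delta}_F(\vec{a})$ and then run the classical quadratic-convergence argument for Newton's method. First I would use the definition (\ref{eq:HgErr}) to write
\[
\vec{a}^\prime = \vec{a} - (\hat{H}_F(\vec{a}))^{-1}\vec{\hat{g}}_F(\vec{a}) = \left(\vec{a} - (H_F(\vec{a}))^{-1}\vec{g}_F(\vec{a})\right) - \vec{\Delta}_F(\vec{a}),
\]
so that by the triangle inequality and the hypothesis (\ref{eq:ErrOrd}),
\[
\delta^\prime \le \left\| \vec{a} - (H_F(\vec{a}))^{-1}\vec{g}_F(\vec{a}) - \vec{a}^\star \right\| + \epsilon .
\]
It then remains to bound the distance to the optimum after one \emph{exact} Newton step.

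For that term I would use $\vec{g}_F(\vec{a}^\star)=\vec{0}$ to rewrite
\[
\vec{a} - (H_F(\vec{a}))^{-1}\vec{g}_F(\vec{a}) - \vec{a}^\star = (H_F(\vec{a}))^{-1}\left[ H_F(\vec{a})(\vec{a}-\vec{a}^\star) - \bigl(\vec{g}_F(\vec{a}) - \vec{g}_F(\vec{a}^\star)\bigr) \right],
\]
then pass to norms with submultiplicativity, bound $\|(H_F(\vec{a}))^{-1}\|\le 1/\mu$ via (\ref{eq:HInvEigenUB}), and bound the bracketed vector by $\tfrac{M}{2}\delta^2$ by invoking the Lipschitz-Hessian consequence (\ref{eq:gDiffBound}) with base point $\vec{a}$ and the other argument $\vec{a}^\star$. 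This yields $\delta^\prime \le \tfrac{M}{2\mu}\delta^2 + \epsilon$, which is (\ref{eq:QAENErrConv}).

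For the second part I would introduce the scalar map $g(t):=\tfrac{M}{2\mu}t^2 + \epsilon$, so that (\ref{eq:QAENErrConv}) reads $\delta^\prime\le g(\delta)$. The difference $g(t)-t=\tfrac{M}{2\mu}t^2-t+\epsilon$ is an upward parabola whose discriminant is positive exactly under condition (\ref{eq:epsCond}) and whose two roots are precisely $\delta_\pm$ of (\ref{eq:deltapm}); hence $g(t)<t$ on $(\delta_-,\delta_+)$ and $g(\delta_\pm)=\delta_\pm$. Consequently, if $\delta_-<\delta<\delta_+$ then $\delta^\prime\le g(\delta)<\delta$, and since $\delta_-\ge 0$ and $g$ is nondecreasing on $[0,\infty)$, if $\delta\le\delta_-$ then $\delta^\prime\le g(\delta)\le g(\delta_-)=\delta_-$; these are exactly the two claims in (\ref{eq:delpr}).

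There is no serious obstacle; the argument is the standard one. The only points needing care are: (i) applying (\ref{eq:gDiffBound}) with the base point equal to the current iterate $\vec{a}$ (not the optimum), which is what makes the Hessian in the rewritten identity be $H_F(\vec{a})$ and lets the $\tfrac{M}{2}\delta^2$ remainder appear; (ii) checking $\delta_-\ge 0$ and the monotonicity of $g$ on the relevant range so the ``$\delta\le\delta_-$'' case closes; and (iii) keeping the vector Euclidean norm and the spectral norm straight when using submultiplicativity.
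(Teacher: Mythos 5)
Your proposal is correct and follows essentially the same route as the paper's proof: decompose the inexact step into the exact Newton step plus $\vec{\Delta}_F(\vec{a})$, use $\vec{g}_F(\vec{a}^\star)=\vec{0}$, (\ref{eq:HInvEigenUB}), and (\ref{eq:gDiffBound}) to get the $\frac{M}{2\mu}\delta^2$ term, and analyze the fixed points $\delta_\pm$ of the quadratic map for (\ref{eq:delpr}). Your parabola argument merely makes explicit the ``simple algebra'' the paper leaves to the reader.
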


The proof is given in Appendix \ref{sec:app2}.
Here, let us comment on what Lemma \ref{lem:deltarel} implies. 
Equation (\ref{eq:QAENErrConv}) indicates that, even when the update differences in Newton's method contain errors at most $\epsilon$, the difference $\delta$ between the optimization variables $\vec{a}$ and the optimal point $\vec{a}^{\star}$ quadratically converges like Newton's method with no error, as long as $\epsilon \lesssim \frac{M}{2\mu}\delta^2$.
More strictly, while $\delta_-<\delta$, $\delta$ decreases at every update, as shown in (\ref{eq:delpr}).
On the other hand, after $\delta$ reaches $\delta_-$, $\epsilon$ is not negligible in (\ref{eq:QAENErrConv}), and therefore $\delta$ does not necessarily decreases.
Nevertheless, $\delta$ does not exceeds $\delta_-$, once it goes below.
Since $\delta_-<2\epsilon$, we can make $\vec{a}$ converge with desired accuracy if we can suppress $\epsilon$.

Using Lemma \ref{lem:deltarel}, we obtain the following lemma, which shows how many updates are sufficient for $\delta$ to reach $2\epsilon$ in Newton's method with erroneous update differences.

\begin{lemma}
	Let Assumptions \ref{ass:twiceDiff} to \ref{ass:LipHess} be satisfied.
	Suppose that we repeatedly updates $\vec{a}\in\mathbb{R}^d$ by (\ref{eq:NewtonErr}) from some initial point $\vec{a}_0$, where we write the result of $n$-times updates as $\vec{a}_n$ and define $\delta_n:=\|\vec{a}_n-\vec{a}^\star\|$ for $n=0,1,2,...$. 
	Then, for any positive number $\epsilon$ satisfying (\ref{eq:epsCond}) and any $\vec{a}_0$ such that
	\begin{equation}
	\|\vec{a}_0-\vec{a}^\star\|<\frac{\mu}{M}, \label{eq:iniCond}
	\end{equation}
 	if (\ref{eq:ErrOrd}) holds for $\vec{a}=\vec{a}_n,n=0,1,2,...$, there exists a non-negative integer $n_{\rm it}$ such that
	\begin{equation}
	\delta_n\le 2\epsilon \label{eq:tgtErr}
	\end{equation}
	for any $n\ge n_{\rm it}$, where 
	\begin{equation}
	n_{\rm it} := \max \left\{\left\lceil\log_2 \left(\frac{\log\left(\frac{2M\epsilon}{\mu}\right)}{2\log\left(\frac{M\delta_0}{\mu}\right)}\right)\right\rceil + 1,1\right\}, \label{eq:iterNum}
	\end{equation}
	\label{lem:ErrNewConv}
\end{lemma}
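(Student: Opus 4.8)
The plan is to reduce the vector iteration (\ref{eq:NewtonErr}) to a one‑dimensional recursion for $\delta_n$ and then count how many quadratic‑contraction steps bring it below $2\epsilon$. Since the hypothesis guarantees that (\ref{eq:ErrOrd}) holds at each iterate $\vec{a}=\vec{a}_n$, applying Lemma \ref{lem:deltarel} at every step gives
\begin{equation}
\delta_{n+1}\le \frac{M}{2\mu}\delta_n^2+\epsilon \label{eq:planrec}
\end{equation}
for all $n\ge 0$. I would then rescale by setting $u_n:=\frac{M}{\mu}\delta_n$ and $\beta:=\frac{M\epsilon}{\mu}$, so that (\ref{eq:planrec}) becomes $u_{n+1}\le\frac12 u_n^2+\beta$, the initial condition (\ref{eq:iniCond}) reads $u_0<1$, the smallness condition (\ref{eq:epsCond}) reads $\beta<\frac12$, and the desired conclusion (\ref{eq:tgtErr}) reads $u_n\le 2\beta$.

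The next step is to record three elementary facts about the map $u\mapsto\frac12u^2+\beta$. First, the target set is absorbing: if $u_n\le 2\beta$ then $u_{n+1}\le\frac12(2\beta)^2+\beta=2\beta^2+\beta\le 2\beta$, where the last inequality uses $\beta\le\frac12$. Second, on the region $u_n>\sqrt{2\beta}$ we have $\beta<\frac12u_n^2$, hence the pure quadratic contraction $u_{n+1}<u_n^2$; together with $u_0<1$ and an easy induction this also keeps $u_n<1$ throughout. Third, a single step from $u_n\le\sqrt{2\beta}$ already reaches the target, $u_{n+1}\le\frac12(2\beta)+\beta=2\beta$. Combining the first and third facts, once the iterate first enters $\{u\le\sqrt{2\beta}\}$, say at index $m$, we have $u_n\le 2\beta$, i.e.\ $\delta_n\le 2\epsilon$, for all $n\ge m+1$.

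It then remains to bound $m$. For $n<m$ the second fact applies, so iterating $u_{j+1}<u_j^2$ yields $u_j\le u_0^{2^j}$ for all $j\le m$. Consequently the iterate has dropped below $\sqrt{2\beta}$ by the first index $j$ with $u_0^{2^j}\le\sqrt{2\beta}$; since $\log u_0<0$ and $\log(2\beta)<0$, this is the first $j$ with $2^j\ge\frac{\log(2\beta)}{2\log u_0}$, i.e.\ $j\ge\log_2\left(\frac{\log(2\beta)}{2\log u_0}\right)$. Hence, when $u_0>\sqrt{2\beta}$, $m\le\left\lceil\log_2\left(\frac{\log(2\beta)}{2\log u_0}\right)\right\rceil$, while $m=0$ when $u_0\le\sqrt{2\beta}$. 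Substituting back $2\beta=\frac{2M\epsilon}{\mu}$ and $u_0=\frac{M\delta_0}{\mu}$, and taking the maximum with $1$ to cover the latter case (and more generally any case in which the ceiling is non‑positive), gives $\delta_n\le 2\epsilon$ for all $n\ge n_{\rm it}$ with $n_{\rm it}$ exactly as in (\ref{eq:iterNum}).

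I do not anticipate a genuine obstacle; the work is essentially careful bookkeeping. The points needing attention are: verifying that the rescaled iteration stays in $[0,1)$ so that the quadratic‑contraction estimate keeps applying; checking that the bound $u_j\le u_0^{2^j}$, which is only justified while $n<m$, is nonetheless enough to control $m$ (it is, because the moment $u_0^{2^j}$ falls below $\sqrt{2\beta}$ the iterate must already have entered that region); and confirming that the $\max\{\cdot,1\}$ in (\ref{eq:iterNum}) correctly handles the boundary cases such as $\delta_0\le\sqrt{2\mu\epsilon/M}$ or $\delta_0\le 2\epsilon$. Finally, it is worth noting that only the inequality (\ref{eq:QAENErrConv}) of Lemma \ref{lem:deltarel} is used; the monotonicity statements (\ref{eq:delpr}) are not needed for this argument.
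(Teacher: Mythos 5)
Your proof is correct and follows essentially the same route as the paper's: quadratic contraction $\delta_{n+1}\lesssim \frac{M}{\mu}\delta_n^2$ while $\delta_n>\sqrt{2\mu\epsilon/M}$, a single step from that threshold down to $2\epsilon$, and the absorption of the set $\{\delta\le 2\epsilon\}$, yielding the same iteration count. The only (harmless) difference is that you derive the absorption property directly from the recursion (\ref{eq:QAENErrConv}) using $\frac{2M\epsilon}{\mu}<1$, whereas the paper invokes the monotonicity statements (\ref{eq:delpr}) of Lemma \ref{lem:deltarel} for that step.
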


The proof is given in Appendix \ref{sec:app3}

Next, we consider how accurate we should estimate the gradient and the Hessian in order to suppress the error in the update difference to the order of $\epsilon$.

\begin{lemma}
	Let Assumptions \ref{ass:twiceDiff} to \ref{ass:LipHess} be satisfied.
	Besides, suppose that we are given a positive integer $n$, a positive number $\epsilon$ satisfying (\ref{eq:epsCond}), and $\vec{a}\in \mathbb{R}^d$ satisfying $\delta<\frac{2\mu}{M}$, where $\delta:=\|\vec{a}-\vec{a}^\star\|$. 
	Then, in order for $\Delta_F(\vec{a})$ defined as (\ref{eq:HgErr}) to satisfy $\|\vec{\Delta}_{F}(\vec{a})\| \le \epsilon$, it is sufficient that each component of the estimated gradient $\vec{\hat{g}}_F(\vec{a})$ and Hessian $\hat{H}_F(\vec{a})$ has an additive error $\epsilon^\prime_g$ and $\epsilon^\prime_H$ such that
	\begin{equation}
	\epsilon^\prime_g \le \frac{\mu\epsilon}{2d^{1/2}}. \label{eq:epg}
	\end{equation}
	and
	\begin{equation}
	\epsilon^\prime_H\le\frac{\mu\epsilon}{4\tilde{\delta} d}, \label{eq:epH}
	\end{equation}
	where $\tilde{\delta}:=\max\{\delta,\delta_-\}$, respectively.
	\label{lem:comperr}
\end{lemma}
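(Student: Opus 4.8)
The plan is to bound $\|\vec{\Delta}_F(\vec{a})\|$ by writing the difference $(\hat{H}_F)^{-1}\vec{\hat{g}}_F - (H_F)^{-1}\vec{g}_F$ in a way that separates the gradient error from the Hessian error. First I would introduce $\delta H := \hat{H}_F(\vec{a}) - H_F(\vec{a})$ and $\delta\vec{g} := \vec{\hat{g}}_F(\vec{a}) - \vec{g}_F(\vec{a})$, and use the algebraic identity
\begin{equation}
(\hat{H}_F)^{-1}\vec{\hat{g}}_F - (H_F)^{-1}\vec{g}_F = (\hat{H}_F)^{-1}\delta\vec{g} + \left((\hat{H}_F)^{-1} - (H_F)^{-1}\right)\vec{g}_F.
\end{equation}
Applying the triangle inequality and the submultiplicativity of the spectral norm, the task reduces to controlling four quantities: $\|(\hat{H}_F)^{-1}\|$, $\|\delta\vec{g}\|$, $\|(\hat{H}_F)^{-1}-(H_F)^{-1}\|$, and $\|\vec{g}_F(\vec{a})\|$.

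The bound $\|\delta\vec{g}\| \le \sqrt{d}\,\epsilon'_g$ is immediate from the componentwise bound and the relation between the max-norm and the Euclidean norm; similarly $\|\delta H\| \le \|\delta H\|_F \le d\,\epsilon'_H$. For $\|\vec{g}_F(\vec{a})\|$, I would use that $\vec{g}_F(\vec{a}^\star)=\vec{0}$ and strong convexity (specifically, the $\mu$-Lipschitz-type lower bound on $H_F$, which gives $\|\vec{g}_F(\vec{a})\| = \|\vec{g}_F(\vec{a}) - \vec{g}_F(\vec{a}^\star)\| \le \|H_F\|\,\delta$ only if we have an upper bound on $\|H_F\|$); more carefully, I would instead use the mean-value form $\vec{g}_F(\vec{a}) = \left(\int_0^1 H_F(\vec{a}^\star + t(\vec{a}-\vec{a}^\star))\,dt\right)(\vec{a}-\vec{a}^\star)$ together with $\|H_F(\vec{b})\| \le \|H_F(\vec{a}^\star)\| + M\delta$ from Assumption \ref{ass:LipHess}, but since we want a clean bound it is cleaner to observe that what actually enters is $\|(H_F(\vec{a}))^{-1}\vec{g}_F(\vec{a})\|$, the \emph{exact} Newton step, which by \eqref{eq:gDiffBound} applied at $\vec{b}=\vec{a}$ and evaluated using $\vec{g}_F(\vec{a}^\star)=0$ is bounded by $\delta + \frac{M}{2\mu}\delta^2 \le \tilde\delta(1 + \tfrac12)$ roughly; I would track constants to land on the stated form. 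For $\|(\hat H_F)^{-1}\|$ and $\|(\hat H_F)^{-1}-(H_F)^{-1}\|$, I would use the standard perturbation bound: provided $\|(H_F)^{-1}\|\,\|\delta H\| \le \tfrac12$, we get $\|(\hat H_F)^{-1}\| \le 2\|(H_F)^{-1}\| \le 2/\mu$ and $\|(\hat H_F)^{-1}-(H_F)^{-1}\| \le 2\|(H_F)^{-1}\|^2\|\delta H\| \le 2\|\delta H\|/\mu^2$; the smallness condition $\|\delta H\| \le \mu/2$ must be checked to follow from \eqref{eq:epH} together with \eqref{eq:epsCond} and $\delta < 2\mu/M$ (these give $\tilde\delta < 2\mu/M$, hence $\epsilon'_H \le \mu^2\epsilon/(8\mu) = \mu\epsilon/8 < \mu/16 < \mu/(2d)$, so $\|\delta H\| \le d\epsilon'_H$ is indeed at most $\mu/2$ — I would verify the arithmetic).

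Assembling the pieces: the first term contributes $\le (2/\mu)\sqrt{d}\,\epsilon'_g$, which is $\le \epsilon/2$ by \eqref{eq:epg}; the second term contributes $\le (2d\epsilon'_H/\mu^2)\cdot\|\vec{g}_F(\vec{a})\|$, and using the bound on $\|(H_F)^{-1}\vec{g}_F(\vec{a})\|$ of order $\tilde\delta$ together with $\|\vec{g}_F(\vec{a})\| \le \|H_F\|\cdot(\text{that})$ — actually the cleanest route rewrites the second term directly as $((\hat H_F)^{-1}-(H_F)^{-1})\vec g_F = -(\hat H_F)^{-1}\delta H (H_F)^{-1}\vec g_F$, so it is $\le \|(\hat H_F)^{-1}\|\,\|\delta H\|\,\|(H_F)^{-1}\vec g_F(\vec a)\| \le (2/\mu)(d\epsilon'_H)(2\tilde\delta) \le \epsilon/2$ by \eqref{eq:epH}, where I use $\|(H_F)^{-1}\vec g_F(\vec a)\| \le 2\tilde\delta$ which I would prove from \eqref{eq:QAENErrConv}-type reasoning or directly. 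Summing gives $\|\vec\Delta_F(\vec a)\| \le \epsilon$, as claimed. The main obstacle is the bookkeeping around $\|(H_F(\vec a))^{-1}\vec g_F(\vec a)\|$: one must show this exact Newton-step length is $O(\tilde\delta)$ — the natural argument is that $\|\vec a - (H_F)^{-1}\vec g_F(\vec a) - \vec a^\star\| \le \frac{M}{2\mu}\delta^2$ from \eqref{eq:gDiffBound} and \eqref{eq:HInvEigenUB}, whence $\|(H_F)^{-1}\vec g_F(\vec a)\| \le \delta + \frac{M}{2\mu}\delta^2 \le 2\delta \le 2\tilde\delta$ using $\delta < 2\mu/M$ — so this is really a short computation once one spots the rewriting of the second term via $(\hat H_F)^{-1} - (H_F)^{-1} = -(\hat H_F)^{-1}\delta H\,(H_F)^{-1}$. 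I would present it in that order: rewrite, bound the exact Newton step, invoke the perturbation bound for $\|(\hat H_F)^{-1}\|$, then collect.
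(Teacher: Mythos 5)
Your decomposition is essentially the paper's: the paper writes the first-order perturbation $\vec{\Delta}_F(\vec{a}) \approx -(H_F)^{-1}\delta H_F (H_F)^{-1}\vec{g}_F + (H_F)^{-1}\delta\vec{g}_F$, which is exactly the leading order of your exact identity $\vec{\Delta}_F = (\hat H_F)^{-1}\delta\vec{g} - (\hat H_F)^{-1}\delta H (H_F)^{-1}\vec{g}_F$, and the pivotal step you correctly isolate --- bounding the exact Newton step by $\|(H_F)^{-1}\vec{g}_F(\vec{a})\| \le \delta + \frac{M}{2\mu}\delta^2 \le 2\delta \le 2\tilde\delta$ using (\ref{eq:gDiffBound}), (\ref{eq:HInvEigenUB}), $\vec{g}_F(\vec{a}^\star)=\vec 0$ and $\delta<2\mu/M$ --- is precisely the paper's argument (phrased there via Lemma \ref{lem:deltarel} with $\epsilon=0$). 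Your version is in fact the rigorous form of the paper's argument, since the paper discards $O((\epsilon')^2)$ terms; but this rigor costs you constants, and your arithmetic does not close as stated. With $\|(\hat H_F)^{-1}\|\le 2/\mu$, the gradient term is $(2/\mu)\sqrt{d}\,\epsilon'_g \le \epsilon$ under (\ref{eq:epg}), not $\epsilon/2$, and the Hessian term is $(2/\mu)(d\epsilon'_H)(2\tilde\delta)\le\epsilon$ under (\ref{eq:epH}), not $\epsilon/2$; so your route yields $\|\vec{\Delta}_F\|\le 2\epsilon$ rather than $\epsilon$ (fixable by halving the thresholds, or by noting $\|(H_F)^{-1}\|\|\delta H\|\le 1/4$ here so $\|(\hat H_F)^{-1}\|\le \frac{4}{3\mu}$, which still gives only $\frac{4}{3}\epsilon$). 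The paper evades this only because its first-order treatment uses $1/\mu$ in place of $2/\mu$. Second, your check of the smallness condition is pointed the wrong way: substituting $\tilde\delta<2\mu/M$ into $\epsilon'_H\le\mu\epsilon/(4\tilde\delta d)$ gives a lower bound on the threshold, not an upper bound, since $\tilde\delta$ sits in the denominator. The correct route (the one the paper uses) is that (\ref{eq:epsCond}) implies $\epsilon<\delta_-\le\tilde\delta$, hence $\|\delta H\|\le d\epsilon'_H\le\mu\epsilon/(4\tilde\delta)<\mu/4$, which suffices for invertibility of $\hat H_F$ and for your Neumann bound.
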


\begin{lemma}
	Let Assumptions \ref{ass:twiceDiff} to \ref{ass:LipHess} be satisfied.
	Besides, suppose that we are given a positive number $\epsilon$ satisfying (\ref{eq:epsCond}) and an initial point $\vec{a}_0\in \mathbb{R}^d$ satisfying $\delta_0:=\|\vec{a}_0-\vec{a}^\star\|<\delta_+$.
	Furthermore, let $\vec{a}_n\in\mathbb{R}^d$ be a vector given by $n$-times updates by (\ref{eq:NewtonErr}) from the initial point $\vec{a}_0$.
	Then, $\Delta_F(\vec{a}_n)$ defined as (\ref{eq:HgErr}) satisfies
	\begin{equation}
	\|\Delta_F(\vec{a}_n)\|\le\epsilon \label{eq:ErrOrd2}
	\end{equation}
	for any positive integer $n$, if each component of $\vec{\hat{g}}_F(\vec{a})$ and $\hat{H}_F(\vec{a})$ has an additive error $\epsilon^\prime_g$ and $\epsilon^\prime_H$ satisfying (\ref{eq:epg}) and
	\begin{equation}
	\epsilon^\prime_H\le\frac{\mu\epsilon}{4\tilde{\delta}_0 d}, \label{eq:epH2}
	\end{equation}
	where $\tilde{\delta}_0:=\max\{\delta_0,\delta_-\}$, respectively.
	\label{lem:comperr2}
\end{lemma}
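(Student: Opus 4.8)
The plan is to prove Lemma \ref{lem:comperr2} by a single induction on $n$ whose hypothesis bundles two claims: $(\mathrm a)$ $\delta_n\le\tilde\delta_0$ and $(\mathrm b)$ $\|\vec\Delta_F(\vec a_n)\|\le\epsilon$. Claim $(\mathrm b)$ at step $n$ is what lets us invoke the contraction estimate of Lemma \ref{lem:deltarel} to get $(\mathrm a)$ at step $n+1$, and claim $(\mathrm a)$ at step $n+1$ is what lets us invoke the accuracy criterion of Lemma \ref{lem:comperr} to get $(\mathrm b)$ at step $n+1$; so the two halves feed each other and must be carried together.

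First I would record a few elementary facts used throughout. Since $\epsilon>0$ and (\ref{eq:epsCond}) holds, the discriminant appearing in (\ref{eq:deltapm}) lies strictly between $0$ and $1$, hence $0<\delta_-<\delta_+<2\mu/M$; together with the standing hypothesis $\delta_0<\delta_+$ this gives $\tilde\delta_0=\max\{\delta_0,\delta_-\}<\delta_+<2\mu/M$. For the base case $n=0$, claim $(\mathrm a)$ is immediate, and $(\mathrm b)$ follows by applying Lemma \ref{lem:comperr} at $\vec a=\vec a_0$: its hypothesis $\delta_0<2\mu/M$ holds, (\ref{eq:epg}) is assumed, and since here $\tilde\delta=\max\{\delta_0,\delta_-\}=\tilde\delta_0$, the requirement (\ref{eq:epH}) on $\epsilon'_H$ is exactly (\ref{eq:epH2}); thus $\|\vec\Delta_F(\vec a_0)\|\le\epsilon$.

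For the inductive step, assume $(\mathrm a)$ and $(\mathrm b)$ at step $n$. Because $\delta_n\le\tilde\delta_0<\delta_+$, I may apply Lemma \ref{lem:deltarel} at $\vec a=\vec a_n$, whose hypothesis (\ref{eq:ErrOrd}) is precisely $(\mathrm b)$ and for which (\ref{eq:epsCond}) holds: if $\delta_-<\delta_n<\delta_+$ then $\delta_{n+1}<\delta_n\le\tilde\delta_0$, and if $\delta_n\le\delta_-$ then $\delta_{n+1}\le\delta_-\le\tilde\delta_0$; in either case $(\mathrm a)$ holds at $n+1$. Then $(\mathrm b)$ at $n+1$ follows by applying Lemma \ref{lem:comperr} at $\vec a=\vec a_{n+1}$: the hypothesis $\delta_{n+1}<2\mu/M$ holds by $(\mathrm a)$ and the facts above, (\ref{eq:epg}) is assumed, and since $\tilde\delta_{n+1}:=\max\{\delta_{n+1},\delta_-\}\le\tilde\delta_0$ (both $\delta_{n+1}\le\tilde\delta_0$ and $\delta_-\le\tilde\delta_0$), the per-step bound (\ref{eq:epH}) with $\tilde\delta=\tilde\delta_{n+1}$ is implied by (\ref{eq:epH2}). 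This closes the induction and gives (\ref{eq:ErrOrd2}) for every positive integer $n$.

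I do not expect a genuine obstacle; the argument is bookkeeping built on the two preceding lemmas. The one point requiring care is the circular-looking interdependence of the two halves of the induction hypothesis noted above, and the accompanying observation that $\tilde\delta_0$ is a uniform upper bound on every $\tilde\delta_n$ — which is exactly why a single accuracy budget (\ref{eq:epg})–(\ref{eq:epH2}), with $\tilde\delta_0$ in place of the step-dependent $\tilde\delta$, suffices at all iterations.
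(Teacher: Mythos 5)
Your proposal is correct and follows essentially the same route as the paper: an induction maintaining $\delta_n\le\tilde\delta_0$, using Lemma \ref{lem:comperr} (with $\tilde\delta\le\tilde\delta_0$ so that (\ref{eq:epH2}) implies (\ref{eq:epH})) to get $\|\vec{\Delta}_F(\vec{a}_n)\|\le\epsilon$, and then Lemma \ref{lem:deltarel} to propagate $\delta_{n+1}\le\max\{\delta_n,\delta_-\}\le\tilde\delta_0$. Your write-up merely makes the two-part induction hypothesis explicit where the paper leaves it implicit.
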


The proofs of Lemmas \ref{lem:comperr} and \ref{lem:comperr2} are given in Appendix \ref{sec:app4} and \ref{sec:app5}, respectively.

Combining Lemma \ref{lem:ErrNewConv} and Lemma \ref{lem:comperr2}, we immediately obtain the following theorem.
\begin{theorem}
	Let Assumptions \ref{ass:twiceDiff} to \ref{ass:LipHess} be satisfied.
	Besides, suppose that we are given a positive number $\epsilon$ satisfying (\ref{eq:epsCond}) and $\vec{a}_0$ satisfying (\ref{eq:iniCond}).
	Then, using the QAE-based Newton's method which is based on the updating formula (\ref{eq:NewtonErr}), we obtain a $2\epsilon$-additive approximation of $\vec{a}^\star$ by $n_{\rm it}$-times updates, where $n_{\rm it}$ is given by (\ref{eq:iterNum}), with a success probability higher than 99\%.
	In the process, the total number of calls $P_i,i=1,...,d$ is
	\begin{equation}
	N_{\rm 1stDer} = O\left(\frac{d^{3/2}}{\mu \epsilon}n_{\rm it}\log(n_{\rm it}d^2)\right), \label{eq:N1st}
	\end{equation}
	that for $P_{ij},i,j=1,...,d$ is
	\begin{equation}
	N_{\rm 2ndDer} = O\left(\frac{\tilde{\delta}_0 d^3}{\mu \epsilon}n_{\rm it}\log(n_{\rm it}d^2)\right), \label{eq:N2nd}
	\end{equation}
	and that for $P_c$ is
	\begin{equation}
	N_c=N_{\rm 1stDer}+N_{\rm 2ndDer}. \label{eq:Nc}
	\end{equation}
	\label{theoGen}
\end{theorem}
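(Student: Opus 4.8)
The plan is to obtain the theorem by chaining Lemma~\ref{lem:comperr2} and Lemma~\ref{lem:ErrNewConv} and then counting the QAE queries. First I would fix, once and for all, the target precisions of the QAE subroutines: estimate each component of the gradient to additive error $\epsilon^\prime_g := \frac{\mu\epsilon}{2d^{1/2}}$ and each component of the Hessian to additive error $\epsilon^\prime_H := \frac{\mu\epsilon}{4\tilde{\delta}_0 d}$, so that (\ref{eq:epg}) and (\ref{eq:epH2}) hold. Since (\ref{eq:epsCond}) gives $\sqrt{1-\frac{2M}{\mu}\epsilon}>0$, we have $\delta_+ = \frac{\mu}{M}\bigl(1+\sqrt{1-\frac{2M}{\mu}\epsilon}\bigr) > \frac{\mu}{M}$, so the hypothesis $\delta_0<\delta_+$ of Lemma~\ref{lem:comperr2} is implied by (\ref{eq:iniCond}). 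Lemma~\ref{lem:comperr2} then guarantees $\|\vec{\Delta}_F(\vec{a}_n)\|\le\epsilon$ for every iterate $\vec{a}_n$ produced by (\ref{eq:NewtonErr}) with these precisions; in particular the required precision depends only on $\tilde{\delta}_0$, which is fixed by the initial point, so it need not be re-tuned along the trajectory. (The same error bound also keeps $\hat{H}_F(\vec{a}_n)$ close enough to $H_F(\vec{a}_n)\succeq\mu I_d$ to remain invertible, so (\ref{eq:NewtonErr}) is well-defined at every step.) With (\ref{eq:ErrOrd}) now established for $\vec{a}=\vec{a}_n$, $n=0,1,2,\dots$, and with (\ref{eq:epsCond}) and (\ref{eq:iniCond}) assumed, Lemma~\ref{lem:ErrNewConv} applies verbatim and yields $\delta_n\le 2\epsilon$ for all $n\ge n_{\rm it}$, with $n_{\rm it}$ as in (\ref{eq:iterNum}). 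This is the claimed $2\epsilon$-additive approximation after $n_{\rm it}$ updates.

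Next I would account for the success probability and the query cost simultaneously. The algorithm performs $n_{\rm it}$ updates, each requiring one QAE estimate for each of the $d$ gradient components and each of the $d(d+1)/2$ independent Hessian components, i.e. a number $O(n_{\rm it}d^2)$ of QAE runs that is fixed in advance. Boosting each QAE via the median trick to succeed with probability larger than $1-0.01/(Cn_{\rm it}d^2)$ for a suitable constant $C>0$ costs a multiplicative $O(\log(n_{\rm it}d^2))$ per estimate and, by a union bound, makes all estimates simultaneously within their target precision with probability larger than $99\%$; conditioned on that event the deterministic analysis of the previous paragraph is valid. For the per-estimate cost, QAE at precision $\epsilon^\prime_g$ makes $O(1/\epsilon^\prime_g)=O\bigl(d^{1/2}/(\mu\epsilon)\bigr)$ calls to the relevant $P_i$ (and the same number to $P_c$), and QAE at precision $\epsilon^\prime_H$ makes $O(1/\epsilon^\prime_H)=O\bigl(\tilde{\delta}_0 d/(\mu\epsilon)\bigr)$ calls to $P_{ij}$ (and to $P_c$). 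Multiplying by the number of components, the number of updates $n_{\rm it}$, and the boosting factor gives (\ref{eq:N1st}) for the total number of $P_i$ calls and (\ref{eq:N2nd}) for that of $P_{ij}$ calls, while $P_c$ is invoked once in every gradient- and Hessian-estimating circuit, whence (\ref{eq:Nc}).

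The main obstacle is the consistency of the chaining. I must verify that the precision bounds supplied by Lemma~\ref{lem:comperr2} hold uniformly in $n$ \emph{without} knowing the iterates beforehand; this is exactly why it matters that the lemma is stated with $\tilde{\delta}_0$ (a function of $\vec{a}_0$ only) rather than with $\tilde{\delta}_n$. The other delicate point is the probabilistic bookkeeping: the sequence $\vec{a}_0,\vec{a}_1,\dots$ that appears in the convergence estimates is the sequence actually produced by the algorithm only on the event that all QAE estimates meet their target precision, so the union bound above is what turns the otherwise-deterministic argument into a $99\%$-confidence statement. Everything else is routine substitution of the precision choices into the QAE query complexity.
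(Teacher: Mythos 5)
Your proposal is correct and follows essentially the same route as the paper's proof: chain Lemma~\ref{lem:comperr2} with Lemma~\ref{lem:ErrNewConv} on the event that all QAE estimates meet their targets, then multiply the $O(n_{\rm it}d^2)$ estimates by the $O(\log(n_{\rm it}d^2))$ median-trick repetitions and the $O(1/\epsilon^\prime_g)$ or $O(1/\epsilon^\prime_H)$ per-run query counts. Your explicit check that (\ref{eq:iniCond}) implies $\delta_0<\delta_+$ (so that Lemma~\ref{lem:comperr2} applies) is a small detail the paper leaves implicit, but otherwise the arguments coincide.
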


\begin{proof}
At First, let us consider a situation where all the estimation processes in the QAE-based Newton's method are successful, more concretely, all the estimations have errors smaller than the desired level.
Correspondingly, suppose that, for $\epsilon^\prime_g$ satisfying (\ref{eq:epg}) and $\epsilon^\prime_H$ satisfying (\ref{eq:epH2}), we can calculate each component of $\vec{\hat{g}}_F$ and $\hat{H}_F$ with an $\epsilon^\prime_g$-additive error and an $\epsilon^\prime_H$-additive error, respectively, at each update in the QAE-based Newton's method.
Because of Lemma \ref{lem:comperr2}, this means that (\ref{eq:ErrOrd2}) is satisfied at each update.
Then, from Lemma \ref{lem:ErrNewConv}, we obtain the optimization variable $\vec{a}_{n_{\rm it}}$ satisfying (\ref{eq:tgtErr}), which is therefore $2\epsilon$-approximation of $\vec{a}^\star$, after $n_{\rm it}$-time updates.

Then, let us evaluate the total number of calls to $P_{i}$ and $P_{i,j}$ in the above process, recalling that the success probability of QAE is not 1.
We calculate the elements of $\vec{\hat{g}}_F$ and $\hat{H}_F$, whose number is $O(d^2)$ in total, at $n_{\rm it}$ updates.
Therefore, the total number of such calculations is $O(n_{\rm it}d^2)$.
To make the probability that all calculations are done with required errors larger than 99\%, it suffices to enhance the success probability of each calculation to $1-\frac{0.01}{n_{\rm it}d^2}$.
This can be achieved by repeating QAEs $O\left(\log (n_{\rm it}d^2)\right)$ times in each calculation.
Besides, in one trial in repeating QAEs, it is sufficient to make $O\left(\frac{1}{\epsilon^\prime_g}\right)$ calls to $P_{i}$ (resp. $O\left(\frac{1}{\epsilon^\prime_H}\right)$ calls to $P_{i,j}$) in order to get an element of $\vec{\hat{g}}_F$ (resp. $\hat{H}_F$) with an $\epsilon^\prime_g$-additive (resp. $\epsilon^\prime_H$-additive) error.
In summary, we calculate the $d$ components of $\vec{\hat{g}}_F$ by $O\left(\log (n_{\rm it}d^2)\right)$-time repeating QAEs with $O\left(\frac{1}{\epsilon^\prime_g}\right)$ calls to $P_{i}$ at each of $n_{\rm it}$ updates, and therefore the total call number of $P_{i}$ becomes (\ref{eq:N1st}).
Similarly, that of $P_{i,j}$ becomes (\ref{eq:N2nd}).
 
Since one call to $P_i$ or $P_{i,j}$ accompanies one call to $P_c$, we obtain (\ref{eq:Nc}).
\end{proof}

\subsection{The Newton's method based on classical Monte Carlo}

The important feature of the QAE-based Newton's method is that we estimate the components in the Hessian and the gradient, which are represented as the sum of many similar terms, by QAE, or, more concretely, the procedure similar to the quantum algorithm for Monte Carlo integration based on QAE~\cite{Montanaro,Suzuki}.
It is natural to compare the QAE-based Newton's method with the classical Monte Carlo (CMC)-based Newton's method, which is just Newton's method whose summation parts for calculating the Hessian and the gradient are replaced with CMC.
If we can use the oracles similar to those in the above discussion, $P_c, P_i$ and $P_{i,j}$ in Sec. \ref{sec:extension} on a classical computer, we can estimate the components of the Hessian and the gradient also by CMC, that is, we can take the average of the values of sampled terms in the sum.
Note that the classical counterpart of a QRAM, which can be used for $P_x, P_y$ and $P_c$, is just a RAM.

Let us estimate the complexity of this method.
In the classical Monte Carlo, the estimation error is proportional to $N_{\rm samp}^{-1/2}$, where $N_{\rm samp}$ is the sample number.
More precisely speaking, asymptotically, the deviation of the estimated value from the true value is less than $\frac{C}{\sqrt{N_{\rm samp}}}$ with a probability $1-\gamma$, where $C=O({\rm polylog}(\gamma))$.
In other words, it is sufficient to take $O\left(\frac{{\rm polylog} (\gamma)}{\epsilon^2}\right)$ samples in order to obtain an estimation with error less than $\epsilon$ with probability higher than $1-\gamma$.
Besides, note that the number of calling for each oracle is proportional to the sample number.
Then, by a discussion similar to that on Theorem \ref{theoGen}, we see that we obtain a $2\epsilon$-additive approximation of $\vec{a}^\star$ by the CMC-based Newton's method with a probability higher than 99\%.
In the process, the iteration number $n_{\rm it}$ is given as (\ref{eq:iterNum}), the total number of calling $P_i,i=1,...,d$ is
\begin{equation}
	N_{\rm 1stDer} = O\left(\frac{d^2}{\mu^2 \epsilon^2}n_{\rm it}{\rm polylog}(n_{\rm it}d^2)\right),
\end{equation}
that for $P_{ij},i,j=1,...,d$ is
\begin{equation}
	N_{\rm 2ndDer} = O\left(\frac{(\tilde{\delta}_0)^2 d^4}{\mu^2 \epsilon^2}n_{\rm it}{\rm polylog}(n_{\rm it}d^2)\right),
\end{equation}
and that for $P_c$ is (\ref{eq:Nc}).
Therefore, as the QAE-based Newton's method does, the CMC-based Newton's method removes the dependency of the query complexity on $N_D$, even though it is classical.
Note that the complexity of the QAE-based Newton's method is smaller than the CMC-based one, due to the so-called quadratic quantum speedup of Monte Carlo integration.

\subsection{Related previous works}

As a final comment, we briefly refer to some previous works \cite{Shao2,Liu}, which used similar ideas to ours in the sense that the derivatives of the objective function are calculated by QAE.
These papers considered not Newton's method but gradient descent and its application to an optimization problem with an objective function consisting of many similar terms, which is less general than (\ref{eq:objFunc}) but includes linear regression.
Then, they aimed at speedup of gradient calculation by QAE similarly to ours.
Compared with Newton's method, gradient descent has both pros and cons.
As pros, it requires not a Hessian but only a gradient, and it reaches the minimum from any initial point in convex optimization.
On the other hand, as cons, it usually requires more iterations than Newton's method and there is an issue on choice of appropriate step size.

\section{\label{sec:summary}Summary}

In this paper, we proposed a quantum algorithm for linear regression, or, more concretely, estimation of regression coefficients as classical data.
Existing algorithms such as \cite{Wiebe,Wang} create the quantum state encoding coefficients in its amplitude by the HHL algorithm \cite{HHL} or its modification and then read out the coefficients.
On the other hand, in our method, we estimate the elements of $W=\frac{1}{N_D}X^TX$ and $\vec{z}=\frac{1}{N_D}X^T\vec{y}$, where $X=(x^{(i)}_k)_{\substack{1\le i \le d \\ 1 \le k \le N_D}}$ is the design matrix, $\vec{y}=(y_1,...,y_k)^T$ is the objective variable vector, and $N_D$ is the number of the data points, and then we find the coefficients by classical computation of $\vec{a}=W^{-1}\vec{z}$.
Since, as shown in (\ref{eq:w}) and (\ref{eq:z}), the elements have the form of the sum of $x^{(i)}_kx^{(j)}_k$ or $x^{(i)}_ky_k$ over data points, we can estimate them by QAE~\cite{Brassard,Suzuki,Aaronson,Grinko,Nakaji,Brown,Tanaka}, assuming availability of the oracles $P_x$ and $P_y$ which output $x^{(i)}_k$ and $y_k$, respectively, for specified $i$ and $k$.
The query complexity of our method is given as (\ref{eq:numPx}) and (\ref{eq:numPy}), which means exponential speedup with respect to $N_D$ compared with the naive classical method, and improvement with respect to the tolerance error $\epsilon$ compared with the previous quantum methods such as \cite{Wiebe,Wang}.

Finally, we extended our method to more general optimization problems, that is, convex optimization with an objective function consisting of many similar terms like (\ref{eq:objFunc}).
In light of the interpretation of linear regression as Newton's method, we proposed the QAE-based Newton's method, in which the gradient and the Hessian are estimated by QAE.
Introducing effects of estimation errors in the ordinary discussion on convergence of Newton's method, we derived the convergence property and the query complexity of the QAE-based Newton's method.
Even if there are estimation errors, the method shows well-known quadratic convergence and reaches the solution in a small number of iterations.

Obtaining the improved dependence of complexity on $\epsilon$, we expect that we can apply our method to mid-sized but many-times-repeated regression problems like LSM.
Generally speaking, our method can be better than the previous quantum methods and the naive classical method when $d,\kappa \ll \frac{1}{\epsilon} \ll N_D$.
On the other hand, since the complexity of our method depends on $d$ and $\kappa$ more strongly than the previous quantum methods, they will be better when $\frac{1}{\epsilon} \ll d$ or $\frac{1}{\epsilon} \ll \kappa$.
Since the naive classical method does not induce any error as quantum ones, it will be better when $N_D \ll \frac{1}{\epsilon}$.

In future work, we will consider implementation of LSM on quantum computers using our method for linear regression, as a concrete and practical use case of quantum computing in financial industry.

\section*{Acknowledgment}

This work was supported by MEXT Quantum Leap Flagship Program (MEXT Q-LEAP) Grant Number JPMXS0120319794.

\appendix

\section{Proofs of the lemmas}

\subsection{\label{sec:app} Proof of Lemma \ref{lem}}

\begin{proof}
	First, let us prove that $\hat{W}$ is invertible under
	\begin{equation}
	\epsilon^\prime < c/d\kappa^2. \label{eq:condHatWInv}
	\end{equation}
	Note that $W$ is positive-definite, and, according to discussions on perturbation on eigenvalues (see, for example, \cite{Trefethen}),
	\begin{equation}
		|\hat{\lambda}_{\rm min} - \lambda_{\rm min}| \le \|\delta W\|, \label{eq:WWhatEigenRel}
	\end{equation}
	where $\lambda_{\rm min}$ and $\hat{\lambda}_{\rm min}$ are the smallest eigenvalues of $W$ and $\hat{W}$, respectively.
	Therefore, it is sufficient to prove $\|\delta W\|< \lambda_{\rm min}$ in order to prove that $\hat{W}$ is positive definite, and then invertible.
	
	Then, let us bound $\lambda_{\rm min}$ first.
	Note that
	\begin{equation}
		\|W\| > c. \label{eq:ineq1}
	\end{equation}
	We can see this from
	\begin{equation}
		\|W\| \geq \frac{1}{d}{\rm Tr} W
	\end{equation}
	and
	\begin{equation}
		{\rm Tr} W = \sum_{i=1}^{d}w_{ii}>dc,
	\end{equation}
	where we used Assumption \ref{ass:scale}.
	(\ref{eq:ineq1}) leads to the following inequality:
	\begin{equation}
		\lambda_{\rm min} > \frac{c}{\kappa^2}, \label{eq:lambdaBound}
	\end{equation}
	since
	\begin{equation}
		\lambda_{\rm min} = \frac{\lambda_{\rm max}}{\kappa(W)} = \frac{\|W\|}{\kappa^2},
	\end{equation}
	where $\lambda_{\rm max}$ is the largest eigenvalue of $W$ and we used $\kappa(W)=\kappa^2=\lambda_{\rm max}/\lambda_{\rm min}$ and $\|W\|=\lambda_{\rm max}$. 
	
	Next, let us consider $\|\delta W\|$.
	This can be upper bounded as follows:
	\begin{equation}
		\|\delta W\| < d\epsilon^\prime, \label{eq:ineq3}
	\end{equation}
	This is because
	\begin{equation}
		\|\delta W\| \le \|\delta W\|_F = \left(\sum_{i=1,...,d}\sum_{j=1,...,d}|\delta w_{ij}|^2\right)^{1/2} < d\epsilon^\prime,
	\end{equation}
	where the last inequality is satisfied due to the assumption $|\delta w_{ij}|<\epsilon^\prime$.
	
	Combining (\ref{eq:lambdaBound}) and (\ref{eq:ineq3}), we see that, for $\epsilon^\prime$ satisfying (\ref{eq:condHatWInv}), $\|\delta W\|< \lambda_{\rm min}$ holds, and therefore $\hat{W}$ is invertible.
	
	\ \\
	
	Then, let us prove (\ref{eq:erreval}).
	We use a well-known formula (see \cite{Golub})
	\begin{equation}
	\frac{\|\vec{\hat{a}}-\vec{a}\|}{\|\vec{a}\|} \le \kappa(W) \left(\frac{\|\delta W\|}{\|W\|} + \frac{\|\delta\vec{z}\|}{\|\vec{z}\|}\right) + O((\epsilon^\prime)^2).
	\end{equation}
	This can be transformed as  
	\begin{equation}
	\|\vec{\hat{a}}-\vec{a}\| \le \kappa^4 \left(\frac{\|\vec{z}\|\cdot\|\delta W\|}{\|W\|^2} + \frac{\|\delta\vec{z}\|}{\|W\|}\right) + O((\epsilon^\prime)^2),  \label{eq:deltaabound}
	\end{equation}
	where we used
	\begin{equation}
	\|\vec{a}\|=\|W^{-1}\vec{z}\| \le \|W^{-1}\| \cdot \|\vec{z}\| = \frac{\kappa(W)\|\vec{z}\|}{\|W\|}
	\end{equation}
	and $\kappa(W)=\kappa^2$.
	
	Let us bound the terms in the parenthesis in the R.H.S. in (\ref{eq:deltaabound}).
	In order to bound the first term, we need another inequalities:
	\begin{equation}
	\|\vec{z}\| \le \sqrt{d}, \label{eq:ineq2}
	\end{equation}
	which follows from
	\begin{equation}
	0\le z_i = \frac{1}{N_D}\sum_{k=1}^{N_D}x^{(i)}_ky_k \le 1.
	\end{equation}
	Combining (\ref{eq:ineq1}), (\ref{eq:ineq3}) and (\ref{eq:ineq2}), we obtain
	\begin{equation}
	\frac{\|\vec{z}\|\cdot\|\delta W\|}{\|W\|^2} < \frac{d^{3/2}\epsilon^\prime}{c^2}. \label{eq:ineq4}
	\end{equation}
	
	Next, let us consider the second term $\|\delta\vec{z}\|/\|W\|$.
	This is bounded as
	\begin{equation}
	\frac{\|\delta\vec{z}\|}{\|W\|} < \frac{\sqrt{d}\epsilon^\prime}{c}.  \label{eq:ineq5}
	\end{equation}
	Here, we used
	\begin{equation}
	\|\delta \vec{z}\| \le \sqrt{d}\epsilon^\prime,
	\end{equation}
	which follows from the assumption $|\delta z_i|<\epsilon^\prime$, and (\ref{eq:ineq1}).
	
	Combining (\ref{eq:deltaabound}), (\ref{eq:ineq4}) and (\ref{eq:ineq5}), we obtain
	\begin{equation}
	\|\vec{\hat{a}}-\vec{a}\| \le \kappa^4\left(\frac{d^{3/2}\epsilon^\prime}{c^2}+\frac{d^{1/2}\epsilon^\prime}{c}\right) + O((\epsilon^{\prime})^2) \le \frac{2d^{3/2}\kappa^4}{c^2} \epsilon^\prime + O((\epsilon^{\prime})^2),
	\end{equation}
	where the second inequality holds since $d\ge 1$ and $c\le 1$.
	Noting that $\|\vec{\hat{a}}-\vec{a}\|_\infty \le \|\vec{\hat{a}}-\vec{a}\|$, we finally obtain (\ref{eq:erreval}) for $\epsilon^\prime$ satisfying $\epsilon^\prime < c^2\epsilon/2d^{3/2}\kappa^4$.
\end{proof}

\subsection{\label{sec:app2} Proof of Lemma \ref{lem:deltarel}}

\begin{proof}
	We can see that
		\begin{eqnarray}
			\delta^\prime
			&= & \|\vec{a}^\prime-\vec{a}^\star\| \nonumber \\
			& = & \left\|\vec{a}^\star-\vec{a}+(\hat{H}_F(\vec{a}))^{-1}\vec{\hat{g}}_F(\vec{a})\right\| \nonumber \\
			& = & \left\|\vec{a}^\star-\vec{a}+(H_F(\vec{a}))^{-1}\vec{g}_F(\vec{a}) + \vec{\Delta}_F(\vec{a})\right\| \nonumber \\
			& = & \left\|(H_F(\vec{a}))^{-1} \cdot \left[H_F(\vec{a})\left(\vec{a}^\star-\vec{a}\right)+\vec{g}_F(\vec{a})-\vec{g}_F(\vec{a}^\star)\right] + \vec{\Delta}_F(\vec{a}) \right\| \nonumber \\
			& \le & \left\|(H_F(\vec{a}))^{-1}\right\| \cdot \left\|H_F(\vec{a})\left(\vec{a}^\star-\vec{a}\right)+\vec{g}_F(\vec{a})-\vec{g}_F(\vec{a}^\star)\right\| + \left\| \vec{\Delta}_F(\vec{a}) \right\| \nonumber \\
			& \le & \frac{M}{2\mu}\|\vec{a}-\vec{a}^\star\|^2+\epsilon \nonumber \\
			& = & \frac{M}{2\mu}\delta^2 + \epsilon.
		\end{eqnarray}
	Here, we used $\vec{g}_F(\vec{a}^\star)=\vec{0}$ to transform the third line to the fourth line and (\ref{eq:HInvEigenUB}),(\ref{eq:gDiffBound}) and (\ref{eq:ErrOrd}) to transform the fifth line to the sixth line.
	Using simple algebra, we can check that
	\begin{equation}
		\delta_-<\delta<\delta_+ \Rightarrow \frac{M}{2\mu}\delta^2+\epsilon < \delta,
	\end{equation}
	and
	\begin{equation}
		\delta\le\delta_- \Rightarrow \frac{M}{2\mu}\delta^2+\epsilon \le \delta_-,
	\end{equation}
	which means that the later part of the statement holds.
\end{proof}

\subsection{\label{sec:app3} Proof of Lemma \ref{lem:ErrNewConv}}

\begin{proof}
	
	First, let us see that
	\begin{equation}
		\delta_{n+1}\le 2\epsilon \ {\rm if} \ \delta_n\le 2\epsilon, \label{eq:lessthan2eps}
	\end{equation}
	for a non-negative integer $n$.
	This immediately follows Lemma \ref{lem:deltarel}.
	According to it, $\delta_{n+1}<\delta_n$ if $\delta_-<\delta_n\le2\epsilon$, and $\delta_{n+1}\le\delta_-$ if $\delta_n\le\delta_-$ (note that $\delta_-<2\epsilon<\delta_+$ holds under (\ref{eq:epsCond})).
	Combining these, we obtain (\ref{eq:lessthan2eps}).
	
	Then, let us consider the following three cases of the initial deviation.
	
	\ \\
	
	\noindent (i) $\delta_0\le 2\epsilon$
	
	From (\ref{eq:lessthan2eps}), we immediately see that $\delta_n \le 2\epsilon$ also for $n=1,2,...$.
	
	\ \\
	
	\noindent (ii) $2\epsilon<\delta_0 \le\sqrt{2\mu\epsilon/M}$ \ \footnote{Note that (\ref{eq:epsCond}) means $\sqrt{\frac{2\mu\epsilon}{M}}>2\epsilon$.}
	
	In this case, according to Lemma \ref{lem:deltarel}, $\delta_1\le\frac{M}{2\mu}(\delta_0)^2+\epsilon\le2\epsilon$.
	Therefore, by (\ref{eq:lessthan2eps}), $\delta_n\le2\epsilon$ holds for $n=1,2,...$.
	
	\ \\
	
	\noindent (iii) $\sqrt{2\mu\epsilon/M}< \delta_0<\frac{\mu}{M}$
	
	We define
	\begin{equation}
		\tilde{n}:=\left\lceil\log_2 \left(\frac{\log\left(\frac{2M\epsilon}{\mu}\right)}{2\log\left(\frac{M\delta_0}{\mu}\right)}\right)\right\rceil, \nonumber
	\end{equation}
	which is positive in this case.
	In the sequence of the updates from $\vec{a}_0$ to $\vec{a}_{\tilde{n}}$, if $\delta_{n^\prime}\le\sqrt{2\mu\epsilon/M}$ for some $n^\prime\in\{0,1,...,\tilde{n}-1\}$, $\delta_n\le2\epsilon$ holds for $n> n^\prime$, as we discussed in case (ii).
	Otherwise, $\delta_n>\sqrt{2\mu\epsilon/M}$, and therefore $\delta_{n+1}\le \frac{M}{2\mu}(\delta_n)^2+\epsilon\le\frac{M}{\mu}(\delta_n)^2$ holds 
	for $n=0,1,...,\tilde{n}-1$, which means that
	\begin{equation}
		\delta_n \le \frac{\mu}{M}\left(\frac{M\delta_0}{\mu}\right)^{2^n},
	\end{equation}
	for $n=1,...,\tilde{n}$ and especially $\delta_{\tilde{n}}\le\sqrt{2\mu\epsilon/M}$.
	Therefore, we see that $\delta_{n}\le2\epsilon$ for $n\ge\tilde{n}+1$, as we discussed in case (ii).
	
	In summary, the statement of the lemma has been proven in all cases.
	
\end{proof}

\subsection{\label{sec:app4} Proof of Lemma \ref{lem:comperr}}

\begin{proof}
	We let $\delta H_F(\vec{a})$ and $\delta\vec{g}_F(\vec{a})$ be the deviations of the estimated Hessian and gradient from the true values, that is, 
	\begin{equation}
		\hat{H}_F(\vec{a})=:H_F(\vec{a})+\delta H_F(\vec{a}), \vec{\hat{g}}_F(\vec{a})=:\vec{g}_F(\vec{a})+\delta\vec{g}_F(\vec{a}),
	\end{equation}
	respectively.
	Besides, we let $\delta H_{F,ij}(\vec{a})$ be the $(i,j)$-th component of $\delta H_F(\vec{a})$ and $\delta g_{F,i}(\vec{a})$ be the $i$-th component of $\delta\vec{g}_F(\vec{a})$.
	Furthermore, following the statement of the lemma, we assume that $\delta H_{F,ij}(\vec{a})$ and $\delta g_{F,i}(\vec{a})$ are bounded as follows
	\begin{equation}
		|\delta H_{F,ij}(\vec{a})|\le\epsilon^\prime_H, |\delta g_{F,i}(\vec{a})|\le\epsilon^\prime_g. \label{eq:EstimErr}
	\end{equation}
	
	Firstly, we show that, under (\ref{eq:epH}), or more loosely
	\begin{equation}
		\epsilon^\prime_H<\frac{\mu}{d}, \label{eq:condPosDef}
	\end{equation}
	$\hat{H}_F(\vec{a})$ is positive-definite and therefore the update (\ref{eq:NewtonErr}) from $\vec{a}$ is well-defined.
	Here, note that $\epsilon/4\tilde{\delta}\le\epsilon/4\delta_-<1/4$ since $\epsilon<\delta_-$, which can be shown by simple algebra under (\ref{eq:epsCond}).
	From (\ref{eq:condPosDef}), we obtain
	\begin{equation}
		\|\delta H_F(\vec{a})\|\le\|\delta H_F(\vec{a})\|_F < \mu \label{eq:delHNormUB}
	\end{equation}
	On the other hand, as (\ref{eq:WWhatEigenRel}), we can see that 
	\begin{equation}
		|\hat{\lambda}_d - \lambda_d| \le \|\delta H_F(\vec{a})\|, \label{eq:eigenErrUB}
	\end{equation}
	where $\lambda_d$ and $\hat{\lambda}_d$ are the minimum eigenvalues of $H_F(\vec{a})$ and $\hat{H}_F(\vec{a})$, respectively.
	Combining (\ref{eq:delHNormUB}), (\ref{eq:eigenErrUB}) and $\lambda_d>\mu$, we see
	\begin{equation}
		\hat{\lambda}_d \ge \lambda_d -  \|\delta H_F(\vec{a})\| > 0,
	\end{equation}
	which means that $\hat{H}_F(\vec{a})$ is positive-definite.
	
	When $\hat{H}_F(\vec{a})$ is invertible, considering up to the first order perturbation with respect to $\epsilon^\prime_H$ and $\epsilon^\prime_g$, we obtain
	\begin{eqnarray}
		& &  \vec{\Delta}_F(\vec{a}) \approx -(H_F(\vec{a}))^{-1}\delta H_F(\vec{a})(H_F(\vec{a}))^{-1}\vec{g}_F(\vec{a}) \nonumber \\
		& & \qquad\qquad\quad +(H_F(\vec{a}))^{-1}\delta\vec{g}_F(\vec{a}). \label{eq:1stPert}
	\end{eqnarray}
	Using the following inequalities
	\begin{equation}
		\|\delta H_F(\vec{a})\| \le \|\delta H_F(\vec{a})\|_F \le d\epsilon^\prime_H,
	\end{equation}
	\begin{equation}
		\|\delta\vec{g}_F(\vec{a})\|\le d^{1/2}\epsilon^\prime_g
	\end{equation}
	and (\ref{eq:HInvEigenUB}), we can evaluate (\ref{eq:1stPert}) as
	\begin{eqnarray}
		& &  \|\vec{\Delta}_F(\vec{a})\| \nonumber \\
		& \lesssim & \|(H_F(\vec{a}))^{-1}\| \cdot \|\delta H_F(\vec{a})\| \cdot \|(H_F(\vec{a}))^{-1}\vec{g}_F(\vec{a})\|  \nonumber \\
		& & \qquad + \|(H_F(\vec{a}))^{-1}\| \cdot \|\delta\vec{g}_F(\vec{a})\| \nonumber \\
		& \le & \frac{d\epsilon^\prime_H}{\mu}\|(H_F(\vec{a}))^{-1}\vec{g}_F(\vec{a})\| + \frac{d^{1/2}\epsilon^\prime_g}{\mu}. \label{eq:errEvalInter}
	\end{eqnarray}
	Since
	\begin{equation}
		\|(H_F(\vec{a}))^{-1}\vec{g}_F(\vec{a})\| \le 2\delta \label{eq:upDiffUB}
	\end{equation}
	as shown later, we obtain $\|\vec{\Delta}_F(\vec{a}_n)\| \le\epsilon$ from (\ref{eq:errEvalInter}) for $\epsilon^\prime_H$ and $\epsilon^\prime_g$ satisfying (\ref{eq:epH}) and (\ref{eq:epg}), respectively.
	
	The remaining task is to show (\ref{eq:upDiffUB}).
	Note that $(H_F(\vec{a}))^{-1}\vec{g}_F(\vec{a})$ is the update difference from $\vec{a}$ with the errorless update (\ref{eq:Newton}).
	In light of this, let us define $\vec{a}^\prime_{\rm NoErr}$ as the result of the update (\ref{eq:Newton}) from $\vec{a}$ and $\delta^\prime_{\rm NoErr}:=\|\vec{a}^\prime_{\rm NoErr}-\vec{a}^\star\|$.
	Since $\delta<2\mu/M$, $\delta^\prime_{\rm NoErr} \le \delta$ holds by Lemma \ref{lem:deltarel} (Eq. (\ref{eq:delpr})) with $\epsilon=0$.
	Therefore,
	\begin{eqnarray}
		\|(H_F(\vec{a}))^{-1}\vec{g}_F(\vec{a})\|&=&\|\vec{a}^\prime_{\rm NoErr}-\vec{a}\|\nonumber \\
		&\le& \|\vec{a}^\prime_{\rm NoErr}-\vec{a}^\star\| + \|\vec{a}-\vec{a}^\star\| \nonumber \\
		& = & \delta^\prime_{\rm NoErr} + \delta \nonumber \\
		&\le& 2\delta.
	\end{eqnarray}

\end{proof}

\subsection{\label{sec:app5} Proof of Lemma \ref{lem:comperr2}}

\begin{proof}
	Assume that $\delta_n=\|\vec{a}_n-\vec{a}^\star\|\le\tilde{\delta}_0$.
	Then, $\epsilon^\prime_H\le\mu\epsilon/4\delta_nd$ because of (\ref{eq:epH2}).
	This means that the conditions on estimation errors (\ref{eq:epg}) and (\ref{eq:epH}) in Lemma \ref{lem:comperr} are satisfied for the update from $\vec{a}_n$.
	Since $\delta_n<\delta_+\le2\mu/M$ is also satisfied, (\ref{eq:ErrOrd2}) holds by Lemma \ref{lem:comperr}.
	Then, we see that $\delta_{n+1}\le\max\{\delta_n,\delta_-\}$ is satisfied by Lemma \ref{lem:deltarel}, which means $\delta_{n+1}\le\tilde{\delta}_0$.
	Using induction, we see that (\ref{eq:ErrOrd2}) holds for any $n$.
\end{proof}


\begin{thebibliography}{99}

\bibitem{Wiebe} N. Wiebe et al., ``Quantum Data Fitting", Phys. Rev. Lett. 109, 050505 (2012)
\bibitem{Schuld} M. Schuld et al., ``Prediction by linear regression on a quantum computer", Phys. Rev. A 94, 022342 (2016)
\bibitem{Wang} G. Wang, ``Quantum Algorithm for Linear Regression", Phys. Rev. A 96, 012335 (2017)
\bibitem{Yu}  C.-H. Yu et al., ``Quantum algorithms for ridge regression", IEEE Transactions on Knowledge and Data Engineering 29, 37491 (2019)
\bibitem{Chakraborty} S. Chakraborty, ``The power of block-encoded matrix powers: improved regression techniques via faster Hamiltonian simulation", Proceedings of the 46th International Colloquium on Automata, Languages, and Programming (ICALP), pp. 33:1-33:14 (2019)
\bibitem{Kerenidis} I. Kerenidis and A. Prakash, ``Quantum gradient descent for linear systems and least squares", Phys. Rev. A 101, 022316 (2020)
\bibitem{Chia} N.-H. Chia, et al., ``Sampling-based sublinear low-rank matrix arithmetic framework for dequantizing quantum machine learning", Proceedings of the 52nd ACM Symposium on the Theory of Computing (STOC) (2020), p. 387 
\bibitem{Gilyen} A. Gilyen et al., ``An improved quantum-inspired algorithm for linear regression", arXiv:2009.07268
\bibitem{Shao} C. Shao and A. Montanaro, ``Faster quantum-inspired algorithms for solving linear systems", arXiv:2103.10309
\bibitem{HHL}  A. W. Harrow et al., ``Quantum algorithm for solving linear systems of equations", Phys. Rev. Lett. 103, 150502 (2009)
\bibitem{Longstaff} F. A. Longstaff and E. S. Schwartz, ``Valuing American Options by Simulations: A Simple Least-Squares Approach", Review of Financial Studies 14, 113 (2001)
\bibitem{Hull} J. C. Hull, ``Options, Futures, and Other Derivatives", Prentice Hall, Englewood Cliffs, NJ (1995)
\bibitem{Shreve} S. Shreve, ``Stochastic Calculus for Finance I \& II", Springer, New York (2004)
\bibitem{Brassard} G. Brassard et. al., ``Quantum amplitude amplification and estimation", Contemporary Mathematics 305, 53 (2002)
\bibitem{Suzuki} Y. Suzuki et. al., ``Amplitude Estimation without Phase Estimation",  Quantum Information Processing 19, 75 (2020)
\bibitem{Aaronson} S. Aaronson and P. Rall, ``Quantum approximate counting, simplified", Proceedings of Symposium on Simplicity in Algorithms (2020), p. 24
\bibitem{Grinko} D. Grinko et al., ``Iterative quantum amplitude estimation", npj Quantum Inf. 7, 52 (2021)
\bibitem{Nakaji} K. Nakaji, ``Faster Amplitude Estimation", Quantum Inf. Comput. 20, 1109 (2020)
\bibitem{Brown} E. G. Brown et al., ``Quantum Amplitude Estimation in the Presence of Noise", arXiv:2006.14145
\bibitem{Tanaka} T. Tanaka, et al., ``Amplitude estimation via maximum likelihood on noisy quantum computer", arXiv:2006.16223
\bibitem{Montanaro} A. Montanaro, ``Quantum speedup of Monte Carlo methods", Proc. Roy. Soc. Ser. A 471, 2181 (2015)
\bibitem{Golub} G. H. Golub and C. F. Van Loan, ``Matrix Computations", Johns Hopkins University Press, Baltimore (1983)
\bibitem{Jerrum} M. Jerrum et al., ``Random generation of combinatorial structures from a uniform distribution", Theoretical Computer Science 43, 169 (1986)
\bibitem{Liu} H.-L. Liu et al., ``Quantum algorithm for logistic regression", arXiv:1906.03834
\bibitem{Giovannetti} V. Giovannetti et al., ``Architectures for a quantum random access memory", Phys. Rev. A 78, 052310 (2008)
\bibitem{MunozCoreas} E. Munoz-Coreas and H. Thapliyal, ``Quantum Circuit Design of a T-count Optimized Integer Multiplier", IEEE Transactions on Computers 68, 5 (2019)
\bibitem{MunozCoreas2} E. Munoz-Coreas and H. Thapliyal, ``T-count and Qubit Optimized Quantum Circuit Design of the Non-Restoring Square Root Algorithm", ACM Journal on Emerging Technologies in Computing Systems 14, 3 (2018)
\bibitem{Rebentrost} P. Rebentrost et. al., ``Quantum computational finance: Monte Carlo pricing of financial derivatives", Phys. Rev. A 98, 022321 (2018)
\bibitem{Stamatopoulos}, N. Stamatopoulos et al., ``Option Pricing using Quantum Computers", Quantum 4, 291 (2020)
\bibitem{Kaneko} K. Kaneko et al., ``Quantum Pricing with a Smile: Implementation of Local Volatility Model on Quantum Computer", arXiv:2007.01467
\bibitem{Nesterov} Y. Nesterov, ``Introductory lectures on convex optimization: a basic course (Applied Optimization (87))", Springer, Boston (2004)
\bibitem{Shao2} C. Shao, ``Fast variational quantum algorithms for training neural networks and solving convex optimizations", Phys. Rev. A 99, 042325 (2019) 
\bibitem{Trefethen} L. N. Trefethen and D. Bau III, ``Numerical Linear Algebra", SIAM, Philadelphia (1997)


\end{thebibliography}
\end{document}